\newtheorem{Definition}{Definition}
\newtheorem{Proposition}{Proposition}
\newtheorem{Lemma}{Lemma}
\newtheorem{theorem}{$\mathbf{Theorem}$}
\newtheorem{lemma}[Lemma]{$\mathbf{Lemma}$}
\newtheorem{proposition}[Proposition]{Proposition}
\begin{document}%
\title{    Power Allocation Strategies   in Energy Harvesting Wireless  Cooperative Networks}

\author{ Zhiguo Ding, \IEEEmembership{Member, IEEE},
Samir M. Perlaza, \IEEEmembership{Member, IEEE}, I\~{n}aki Esnaola, \IEEEmembership{Member, IEEE}, H. Vincent Poor, \IEEEmembership{Fellow, IEEE}\thanks{The material in this papers was presented in part
at the $8$-th International Conference on Communications and Networking in China,
Guilin, China, August, 2013. 

The authors are with Department of
Electrical Engineering, Princeton University, Princeton, NJ 08544,
USA.   Zhiguo Ding is also with  School of
Electrical, Electronic, and Computer Engineering Newcastle
University, NE1 7RU, UK.  }} \maketitle\vspace{-4em}
\begin{abstract}
  In this paper, a wireless  cooperative network is considered,  in which multiple source-destination pairs communicate with each other via an {\it energy harvesting} relay. The focus of this paper is on the relay's strategies to distribute the harvested energy among the multiple users and their impact on the system performance.   Specifically, a   non-cooperative  strategy is to use the energy harvested from the $i$-th source as the relay transmission power to  the $i$-th destination, to which   asymptotic results show that its outage performance decays as $\frac{\log SNR}{SNR}$. A faster decaying rate, $\frac{1}{SNR}$, can be achieved  by the two centralized strategies proposed this the paper, where the water filling based one can achieve optimal performance with respect to several criteria, with a price of high complexity.  An auction based power allocation scheme is also proposed to achieve a better tradeoff between the system performance and complexity. Simulation results are provided to confirm the accuracy of the developed analytical results and facilitate a better performance comparison.

\end{abstract}
\section{Introduction}
Low cost mobile devices have been recognized as  crucial components of various wireless networks with important applications. A typical example is   wireless sensor networks which have been developed for a variety of applications, including surveillance, environmental monitoring  and health care.  Such low cost devices are typically  equipped with fixed energy supplies, such as batteries with  limited operation life. Replacing batteries for such devices is either impossible or expensive, particularly in the case in which  sensors are deployed in hostile environments. Therefore energy harvesting, a technique to collect energy from the surrounding environment, has recently received considerable  attention as a sustainable   solution to overcome the bottleneck of energy constrained wireless networks \cite{Raghunathan06}.

Conventional energy harvesting techniques rely on external energy sources that are  not part of communication networks, such as   those  based on solar power, wind energy, etc. \cite{Raghunathan06,Paradiso05}. Recently a new concept of energy harvesting has been proposed which involves  collecting energy from ambient radio frequency signals \cite{varshney08,Grover10}, so that wireless signals can be used as a means  for the delivery of information and power simultaneously.  In addition, such an approach can also reduce the cost of communication networks, since peripheral equipment to take advantage of  external energy sources can be avoided.  The concept of simultaneous power and information delivery was first proposed in \cite{varshney08}, where the fundamental tradeoff between the energy and information rate   is characterized for point-to-point communication scenarios. The extension of such a concept to frequency selective channels is considered in \cite{Grover10}. In \cite{LiangLiu12} the authors  study energy harvesting for communication scenarios with co-channel interference, where such interference is identified as a potential energy source. The simultaneous transfer of power and information is also   studied in multiple-input multiple-output systems in \cite{Zhangr11g}, and its extension to the scenario with imperfect channel information at the transmitter was considered in \cite{Xiangzt12}.

To ensure such a new concept of energy harvesting implemented in practical systems,  it is important to address the difficulty that practical circuits cannot realize   energy  harvesting and data detection   from wireless signals at the same time. This challenge has motivated  a few recent works deviating  from the ideal assumption that a receiver can detect signals and harvest energy simultaneously.  In \cite{Zhouzhang13}, the authors introduced  a general receiver architecture, in which  the   circuits for energy harvesting and signal detection are operated in a time sharing or power splitting manner. This  approach is naturally applied to a cooperative network with one source-destination pair in \cite{Nasirzhou}, where amplify-and-forward  (AF) is considered and exact expressions for outage probability and throughput are developed.

In this paper,  a general wireless cooperative network is considered,  in which multiple pairs of sources and destinations communicate through an {\it energy harvesting} relay. Specifically,    multiple  sources deliver their information to the relay via orthogonal channels, such as different time slots. The relaying transmissions are  powered by the signals sent from the sources. Assuming   that the battery of the relay is sufficiently large, the relay can accumulate a large amount of power for   relaying transmissions. The aim of this paper is to  study how to efficiently distribute such power among the multiple users   and investigate the impact of these power allocation strategies   on the system performance.

The contribution of this paper is four-fold. {\it Firstly}, a non-cooperative  individual transmission strategy is developed, in which  the relaying transmission to the $i$-th destination is powered by only using the energy harvested from the $i$-th source. Such a simple power allocation scheme will   serve  as a benchmark for other more sophisticated strategies  developed  in the paper. The decode-and-forward  (DF) strategy is considered, and the exact expression of the outage probability achieved by such a scheme is obtained. Based on this expression, asymptotic studies are carried out to show that the average outage probability for such a scheme decays  with the signal-to-noise ratio (SNR) at a rate of $\frac{\log SNR}{SNR}$.

{\it Secondly}, the performance of an equal power allocation scheme is investigated, in which the relay   distributes the accumulated power harvested from the sources evenly among   relaying transmissions. The advantage of such a scheme is that a user pair with poor channel conditions can be helped since more relay transmission power will be allocated to them    compared to the individual transmission strategy.   Exact expressions for the outage performance achieved by this  transmission scheme are obtained. Analytical results show that the equal power allocation scheme can always outperform the individual transmission strategy. For example, the average outage probability achieved by the equal power allocation scheme decays at the rate of $\frac{1}{SNR}$, faster than the individual transmission scheme.

{\it Thirdly},  a more opportunistic power allocation strategy based on the sequential water filling principle is studied. The key idea of such a strategy is that the relay will  serve a user with a better channel condition first, and help  a user with a worse channel condition afterwards if there is any power left at the relay.  This sequential  water filling scheme can achieve the optimal performance for the user with the best channel conditions, and also maximize the number of successful destinations. Surprisingly it can also be  proved that such a scheme minimizes the worst user outage probability. Several bounds are developed for the average outage probability achieved by such a scheme, and asymptotic studies are carried out to show that such bounds exhibit the same   rate of decay at   high SNR.

{\it Finally}, an auction based power allocation scheme is proposed, and the property of its equilibrium is discussed. Recall that the sequential  water filling scheme can achieve  superior performance in terms of receptional reliability, however, such a scheme requires that   channel state information (CSI) is available at the transmitter, which can consume significant system overhead in a multi-user system. As demonstrated by the simulation results, the auction based distributed scheme can achieve much better performance than the equal power and individual transmission schemes,  and close to the water filling strategy.


 \section{Energy harvesting relaying transmissions}\label{section indi}
Consider an energy harvesting  communication scenario with $M$ source-destination pairs and one relay. Each node is equipped with a single antenna.  Each source communicates with its destination via the relay,  through orthogonal channels, such as different time slots. All channels are assumed to be quasi-static Rayleigh fading, and  large scale path loss will  be considered only in Section \ref{section numberical results} in order to simplify the analytical development.

The basic idea of energy harvesting relaying is that an energy constrained  relay   recharges its battery by using  the energy from its observations.  Among the various energy harvesting relaying models, we   focus on power splitting \cite{Zhouzhang13,Nasirzhou}. Specifically, the cooperative transmission consists of two time slots of  duration   $\frac{T}{2}$. At the end of the first phase,   the relay   splits the observations from the $i$-th transmitter into two streams, one for energy harvesting and the other for detection. Let $\theta_i$ denote  the power splitting coefficient for the $i$-th user pair, i.e. $\theta_i$ is the fraction of observations used for energy harvesting.  At the end of the first phase,    the relay's detection is based on the following observation
\begin{eqnarray}
y_{r,i} = \sqrt{(1-\theta_i)P_i} h_i s_i +n_{r,i} ,
\end{eqnarray}
where  $P_i$ denotes the transmission power at the $i$-th source,  $h_i$ denotes the channel gain between the $i$-th source and the relay, $s_i$ is the source message with unit power, and $n_{r,i}$ denotes   additive white Gaussian noise (AWGN) with unit variance. As discussed in \cite{Nasirzhou}, such noise consists of   the   baseband AWGN as well as the sampled AWGN due to the radio-frequency band to baseband signal conversion.   We consider a pessimistic case in which  power splitting only reduces the signal power, but not to the noise power, which can provide  a lower bound for  relaying networks in practice.

The data rate at which the relay can decode the $i$-th source's signal is
\begin{eqnarray}
R_{r,i} = \frac{1}{2} \log(1+(1-\theta_i)P_i|h_{i}|^2),
\end{eqnarray}
and the parameter $\theta_i$ can be set to satisfy the    criterion   $R_{r,i} =R $, i.e.,
\begin{eqnarray}
  \theta_i \triangleq  1-  \frac{2^{2R}-1}{P_i|h_{i}|^2},
\end{eqnarray}
where $R$ is the targeted data rate. The reason for the above choice of $\theta_i$ can be justified as follows. A larger value of $\theta_i$ yields more energy reserved for the second phase transmissions, and therefore is beneficial to improve the performance at the destination. On the other hand, a larger value of $\theta_i$ reduces the signal power for relay detection and hence degrades   the receptional reliability at the relay. For that reason, a reasonable  choice is to use a    $\theta_i$ that assumes    successful detection at the relay, i.e.,  $R_{r,i} \geq R $.

 At the end of the first phase, the relay   harvests the following amount of energy from the $i$-th source:
\begin{eqnarray}
E_{H,i} = \eta P_i|h_i|^2\theta_i \frac{T}{2},
\end{eqnarray}
where $\eta$ denotes the energy harvesting efficiency  factor. During the second time slot, this energy can be used to power the relay transmissions. However, how to best use  such harvested energy is not a trivial problem, since different strategies will have different impacts on  the system performance. In the following subsection, we first introduce a non-cooperative individual transmission strategy, which serves  as a benchmark for the transmission schemes proposed  later.

\subsection{A non-cooperative individual transmission strategy}
A straightforward strategy to use   the harvested energy is   allocating  the energy harvested from the $i$-th source to  the relaying transmission to the $i$-th destination, i.e.,   the relaying transmission power  for the $i$-th destination is
\begin{eqnarray}
P_{ri} \triangleq \frac{E_{H,i}}{\frac{T}{2}} = \eta P_i|h_i|^2\theta_i.
\end{eqnarray}
   During the second time slot, the DF relay   forwards the $i$-th source message if the message is reliably detected at the relay, i.e. $|h_i|^2>\epsilon$, where $\epsilon=\frac{2^{2R}-1}{P_i}$.
    Therefore,   provided that a successful detection at the relay, the $i$-th destination receives the observation,  $  \sqrt{P_{ri}} g_{i} s_i + n_{d,i}$,
which yields a data rate at the $i$-th destination of
\begin{eqnarray}
R_{d,i} = \frac{1}{2} \log(1+P_{ri}|g_{i}|^2),
\end{eqnarray}
where $g_i$ denotes the channel between the relay and the $i$-th destination and $n_{d,i}$ denotes the noise at the destination.   For notational simplicity, it is assumed that the noise at the destination  has the same variance as that  at the relay.  The   outage probability for the $i$-th   user pair can be  expressed as
\begin{eqnarray}\label{eq1}
\mathcal{P}_{i,I} = \mathrm{Pr}\left(\frac{1}{2}\log(1+P_i|h_{i}|^2)<R\right) + \mathrm{Pr}\left(\frac{1}{2}\log(1 \right. \\ \nonumber \left.+P_i|h_{i}|^2)>R, \frac{1}{2} \log(1+P_{ri}|g_{i}|^2) <R \right).
\end{eqnarray}
The following proposition characterizes  the outage of such a strategy.
\begin{proposition}\label{lemma 1}
The use of the non-cooperative  individual transmission strategy yields an outage probability at the $i$-th   destination of
\begin{eqnarray}
\mathcal{P}_{i,I} =  1 - \frac{e^{-\frac{a}{P_i}}}{P_i} \sqrt{\frac{4aP_i}{\eta}}\mathbf{K}_1\left(\sqrt{\frac{4a}{\eta P_i}}\right),
\end{eqnarray}
where $a=2^{2R}-1$ and $\mathbf{K}_n(\cdot)$ denotes the modified Bessel function of the second kind with order $n$. The worst and best outage performance among the $M$ users are $ 1-\left(1-\mathcal{P}_{i,I}\right)^M $ and $ \left(\mathcal{P}_{i,I}\right)^M$, respectively.
\end{proposition}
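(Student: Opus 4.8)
The plan is to collapse the two probabilities in (\ref{eq1}) into a single one-dimensional integral over $|h_i|^2$ and then identify that integral with a standard Bessel-function integral. First I would rewrite everything in terms of $a=2^{2R}-1$: from the definitions in Section~\ref{section indi} one has $\epsilon=a/P_i$, $\theta_i = 1 - a/(P_i|h_i|^2)$, and therefore the relay transmit power collapses to $P_{ri} = \eta P_i|h_i|^2\theta_i = \eta\left(P_i|h_i|^2 - a\right)$, which is the relevant (nonnegative) quantity exactly on the event $\{P_i|h_i|^2 > a\}$ where the relay decodes. Throughout I would use that $|h_i|^2$ and $|g_i|^2$ are independent unit-mean exponential random variables.

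The first term of (\ref{eq1}) is $\mathrm{Pr}(P_i|h_i|^2 < a) = 1 - e^{-a/P_i}$. For the second term I would condition on $|h_i|^2 = x$ with $x > a/P_i$; given this, the destination fails when $P_{ri}|g_i|^2 < a$, i.e. $|g_i|^2 < a/(\eta(P_i x - a))$, an event of probability $1 - \exp\{-a/(\eta(P_i x - a))\}$. Integrating against the exponential density of $|h_i|^2$ and using $\int_{a/P_i}^\infty e^{-x}\,dx = e^{-a/P_i}$ to cancel terms gives
\begin{eqnarray}
\mathcal{P}_{i,I} &=& (1-e^{-a/P_i}) + \int_{a/P_i}^{\infty} e^{-x}\left(1 - e^{-\frac{a}{\eta(P_i x - a)}}\right)dx \nonumber \\
&=& 1 - \int_{a/P_i}^{\infty} e^{-x}\,e^{-\frac{a}{\eta(P_i x - a)}}\,dx .
\end{eqnarray}

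Next I would apply the change of variable $t = P_i x - a$, which maps the lower limit to $0$ and turns the remaining integral into $\frac{e^{-a/P_i}}{P_i}\int_0^{\infty}\exp\left\{-\frac{t}{P_i} - \frac{a}{\eta t}\right\}dt$. The one genuinely non-mechanical step, and the only real obstacle, is to recognize this as the classical representation $\int_0^\infty \exp\{-\frac{\beta}{4t} - \gamma t\}\,dt = \sqrt{\beta/\gamma}\;\mathbf{K}_1(\sqrt{\beta\gamma})$ (Gradshteyn--Ryzhik 3.324.1), which I would invoke with $\beta = 4a/\eta$ and $\gamma = 1/P_i$. This gives $\int_0^{\infty}\exp\{-t/P_i - a/(\eta t)\}\,dt = \sqrt{4aP_i/\eta}\;\mathbf{K}_1(\sqrt{4a/(\eta P_i)})$, and substituting back yields the asserted closed form for $\mathcal{P}_{i,I}$.

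Finally, for the worst/best statements: the $M$ per-pair outage events are mutually independent, since they involve the independent channel pairs $\{(h_i,g_i)\}_{i=1}^M$, and by symmetry each has probability $\mathcal{P}_{i,I}$. The best user is in outage only if every user is in outage, so its outage probability is $\prod_{i=1}^M \mathcal{P}_{i,I} = (\mathcal{P}_{i,I})^M$; the worst user is in outage as soon as at least one user is, so its outage probability is $1 - \prod_{i=1}^M(1-\mathcal{P}_{i,I}) = 1 - (1-\mathcal{P}_{i,I})^M$. The only point meriting a line of care is the degenerate definition of $\theta_i$ on $\{P_i|h_i|^2 \le a\}$, but this event lies entirely inside the first outage term, so it plays no role in the integral computation.
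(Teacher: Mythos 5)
Your proposal is correct and follows essentially the same route as the paper: split \eqref{eq1} into the relay-decoding failure term $1-e^{-a/P_i}$ plus the joint term, substitute $P_{ri}=\eta(P_i|h_i|^2-a)$, shift variables so the integral starts at zero, and invoke Gradshteyn--Ryzhik 3.324.1 to produce the $\mathbf{K}_1$ term (the paper's $z=P_i|h_i|^2-2^{2R}+1$ is exactly your $t=P_ix-a$). The worst/best-user argument via independence and symmetry is also identical to the paper's.
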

\begin{proof}
The first term on the righthand side of \eqref{eq1} can be   calculated as  $\left(1-e^{-\frac{a}{P_i}}\right)$ by using the exponential distribution. On denoting  the second probability on  the righthand side of \eqref{eq1} by $Q_2$, we have
 \begin{eqnarray}
Q_2 &=&  \mathrm{Pr}\left(\frac{1}{2}\log(1+P_i|h_{i}|^2)>R, \frac{1}{2} \log(1\right.  \\ \nonumber &&\left.+\eta P_i |h_{i}|^2\theta_i |g_{i}|^2) <R \right)\\ \nonumber &=&
\mathrm{Pr}\left(\frac{1}{2}\log(1+P_i|h_{i}|^2)>R, \frac{1}{2} \log(1\right. \\ \nonumber && \left.+\eta\left(  P_i |h_{i}|^2-  2^{2R}+1 \right) |g_{i}|^2) <R \right).
\end{eqnarray}
On setting  $z=P_i |h_{i}|^2-  2^{2R}+1 $,   we can write  the density function of $z$ as $f_z(z) = \frac{1}{P_i}e^{-\frac{z+a}{P_i}}$, which yields
\begin{eqnarray}
Q_2 &=&
\int^{\infty}_{0}\left(1-e^{-\frac{a}{\eta z}}\right) f_z(z)dz \\ \nonumber &=& e^{-\frac{a}{P_i}} - \frac{e^{-\frac{a}{P_i}}}{P_i} \sqrt{\frac{4aP_i}{\eta}}\mathbf{K}_1\left(\sqrt{\frac{4a}{\eta P_i}}\right),
\end{eqnarray}
where the last equation is obtained by applying Eq. (3.324.1) in \cite{GRADSHTEYN}. Combining the two probabilities in \eqref{eq1}, the first part of the lemma is proved. The worst and best outage performance can be obtained by using the assumption that  the  channels are   identically and independently distributed.
\end{proof}
   The asymptotic high SNR  behavior  of the outage performance can be used as an benchmark for comparing     power allocation strategies. Our intuition  is that such a straightforward strategy is most likely inefficient, as illustrated in the following. Suppose that  two source nodes with channels $|h_1|^2>> |h_2|^2\approx \frac{a}{P_2}$ and $|g_1|^2>> |g_2|^2$ have information correctly detected  at the relay. Based on the individual transmission scheme, there is little energy harvested from the second source transmission, which results in  $P_{r2}\rightarrow 0$ and therefore a possible detection failure at the second destination.   A more efficient solution to such a case is to allow the users to share the harvested power efficiently, which can  help the user with a poor connection. This scenario is discussed in the following sections.

\section{Centralized mechanisms for power allocation}\label{section centralized}
Recall that each user uses the power splitting fraction $ \theta_i= 1-  \frac{2^{2R}-1}{P_i|h_{i}|^2}$ , which implies that total power reserved at the relay at the end of the first phase is\footnote{Instead of all $M$ sources, we  consider only the power harvested from the $N$ sources which can deliver their information  to the relay successfully. Or in other words, we consider a pessimistic strategy that for each source, the relay will first direct the received signals  to the detection circuit until the receive SNR is sufficient for successful detection. If there is a failure of detection, all the energy must have already been directed to the detection circuit, and there will be no energy left for energy harvesting.    }
\begin{eqnarray}
P_r = \sum^{N}_{i=1}\frac{ E_{H,i}}{\frac{T}{2}} =  \sum^{N}_{i=1}  \eta P_s |h_{i}|^2\theta_i,
\end{eqnarray}
where $N$ denotes the number of   sources whose information can be reliably detected at the relay. Note that $N$ is a random variable whose value depends on the instantaneous source-relay channel realizations.
 To simplify the analysis, it is assumed that all the source transmission powers are the same $P_i=P_s$.
In the following, we   study how to distribute such power among the users based on various  criteria. Specifically, an equal power allocation strategy is introduced first, and then we will investigate the water filling based strategy which achieves a better outage performance but requires more complexity.

\subsection{Equal power allocation}
In this strategy, the relay   allocates the same amount of power to each user, i.e., $P_{ri}= \frac{1}{N} \sum^{N}_{i=1}  \eta P_s |h_{i}|^2\theta_i$. The advantage of such a strategy is that there is no need for the relay to know the relay-destination channel information, which can reduce the system overhead significantly, particularly in a multi-user system. The following theorem   describes  the outage performance achieved by such a power allocation scheme.
 \begin{theorem}\label{theorem1}
  Based on the equal power allocation, the outage probability for the $i$-th   destination is   given by
    \begin{eqnarray}\nonumber
\mathcal{P}_{i,II} =\nonumber
\sum^{M}_{n=1} \frac{1}{(n-1)!}\left((n-1)! -  2\left(\frac{b_n}{P}\right)^{\frac{n}{2}}\mathbf{K}_{n}\left(2\sqrt{\frac{b_n}{P}}\right) \right) \\   \times\frac{(M-1)!}{(n-1)!(M-n)!} e^{-n\epsilon}\left(1-e^{-\epsilon}\right)^{M-n}+\left(1-e^{-\epsilon}\right) ,\nonumber
\end{eqnarray}
where   $b_n=\frac{na}{\eta}$.
 \end{theorem}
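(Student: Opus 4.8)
The plan is to partition the outage event at the $i$-th destination into two disjoint cases (I write $P\equiv P_s$ for the common source power). The $i$-th destination is in outage precisely when either the $i$-th source is not decoded at the relay, i.e. $|h_i|^2<\epsilon$ with $\epsilon=\frac{a}{P}$, or it is decoded but the relaying link fails, $R_{d,i}<R$, i.e. $P_{ri}|g_i|^2<a$. Hence
\begin{eqnarray}
\mathcal{P}_{i,II} &=& \mathrm{Pr}\left(|h_i|^2<\epsilon\right) \\ \nonumber && + \mathrm{Pr}\left(|h_i|^2>\epsilon,\ P_{ri}|g_i|^2<a\right).
\end{eqnarray}
Since $|h_i|^2$ is unit-mean exponential, the first term equals $1-e^{-\epsilon}$, which is the trailing term of the claimed identity, and it remains to evaluate the joint probability in the second term.

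I would then condition on $N$, the number of sources decoded at the relay. The source-relay channels are i.i.d. with $\mathrm{Pr}(|h_j|^2>\epsilon)=e^{-\epsilon}$, and $N\ge1$ on $\{|h_i|^2>\epsilon\}$; summing over which of the remaining $M-1$ sources are also decoded gives
\begin{eqnarray}\nonumber
\mathrm{Pr}\left(|h_i|^2>\epsilon,\ N=n\right)=\frac{(M-1)!}{(n-1)!(M-n)!}\,e^{-n\epsilon}\left(1-e^{-\epsilon}\right)^{M-n},
\end{eqnarray}
the binomial weight appearing in the theorem. The task thus reduces to computing $\mathrm{Pr}\left(P_{ri}|g_i|^2<a\mid |h_i|^2>\epsilon,\ N=n\right)$.

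The key structural fact is that a decoded source $j$ contributes $\eta P|h_j|^2\theta_j=\eta\left(P|h_j|^2-a\right)$ to the relay battery, so on $\{N=n\}$ one has $P_{ri}=\frac{\eta}{n}\sum_{j}\left(P|h_j|^2-a\right)$, the sum over the $n$ decoded sources. Putting $w_j\triangleq P|h_j|^2-a=P\left(|h_j|^2-\epsilon\right)$, the memoryless property of the exponential distribution shows that, conditioned on $|h_j|^2>\epsilon$, the variable $w_j$ is exponential with mean $P$; moreover, conditioned on $\{N=n\}$ with its labelled decoded set, the $w_j$ over the decoded sources are i.i.d. and independent of $|g_i|^2$. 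Hence $\sum_j w_j$ is Gamma distributed with shape $n$ and scale $P$, with density $\frac{t^{n-1}e^{-t/P}}{(n-1)!\,P^{n}}$ on $t>0$. Conditioning further on $\sum_j w_j=t$ and using the exponential CDF of $|g_i|^2$ gives $\mathrm{Pr}\left(P_{ri}|g_i|^2<a\mid\sum_j w_j=t\right)=1-e^{-na/(\eta t)}$, so that
\begin{eqnarray}\nonumber
\mathrm{Pr}\left(P_{ri}|g_i|^2<a\mid N=n\right)=1-\frac{1}{(n-1)!\,P^{n}}\int_0^{\infty}t^{n-1}e^{-\frac{t}{P}-\frac{na}{\eta t}}\,dt.
\end{eqnarray}
Evaluating this integral with Eq.~(3.471.9) of \cite{GRADSHTEYN} (with $\nu=n$, $\beta=\frac{na}{\eta}=b_n$, $\gamma=\frac{1}{P}$) shows that this conditional probability equals $\frac{1}{(n-1)!}\left((n-1)!-2\left(\frac{b_n}{P}\right)^{n/2}\mathbf{K}_{n}\!\left(2\sqrt{\frac{b_n}{P}}\right)\right)$, the bracketed factor in the theorem. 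Multiplying by the binomial weight, summing over $n=1,\dots,M$, and adding $1-e^{-\epsilon}$ yields the stated expression.

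The step I expect to be the main obstacle is the one in the third paragraph: recognizing that the per-source surplus energy $w_j=P|h_j|^2-a$ is, conditioned on successful decoding, again exponentially distributed (so the aggregate harvested energy is Gamma distributed), and checking that after conditioning on $\{N=n\}$ with a designated source in the decoded set these surpluses remain i.i.d. and independent of the relay-destination channel $g_i$. Once this structural fact is in place, the remainder is a routine conditioning argument together with the tabulated Bessel integral above.
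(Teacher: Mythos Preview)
Your proposal is correct and follows essentially the same route as the paper: decompose the outage into the event $\{|h_i|^2<\epsilon\}$ plus the event that $i$ is decoded but the relaying link fails, condition on $N=n$, identify the conditional law of the harvested-power sum, and evaluate the resulting integral via the modified-Bessel identity. The only cosmetic differences are that the paper obtains the binomial weight as $\tfrac{n}{M}\binom{M}{n}e^{-n\epsilon}(1-e^{-\epsilon})^{M-n}$ (conditioning on $i\in\mathcal{S}_2$ given $N=n$) rather than your direct $\binom{M-1}{n-1}$ count, and it derives the shifted-Gamma density of $\sum |h_j|^2$ by an explicit Laplace transform rather than by invoking the exponential memoryless property as you do; both pairs of arguments are equivalent.
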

 \begin{proof}
 See the appendix.
\end{proof}

Based on Theorem \ref{theorem1}, we also obtain the best outage and worst outage performance among the $M$ users achieved by the equal power allocation scheme as follows.
\begin{proposition}\label{corollary}
Based on the use of the equal power allocation, the outage probability  of the user with the best channel conditions among the $M$ users is
  \begin{eqnarray}
\mathcal{P}_{best,II} =\nonumber
   \sum^{M}_{n=1} \frac{2}{(n-1)!}\sum^{n}_{i=0}{n \choose i} (-1)^i \left(\frac{ib_n}{P}\right)^{\frac{n}{2}} \\ \nonumber \times\mathbf{K}_{n}\left(2\sqrt{\frac{ib_n}{P}}  \right) \frac{M!e^{-n\epsilon}}{n!(M-n)!} \left(1-e^{-\epsilon}\right)^{M-n} +\left(1-e^{-\epsilon}\right)^M,
\end{eqnarray}
 and the worst outage performance among the $M$ users is
    \begin{eqnarray}
\mathcal{P}_{worst,II} =\nonumber
 \frac{1}{(M-1)!}\left((M-1)! -  2\left(\frac{Mb_M}{P}\right)^{\frac{M}{2}}\right. \\ \nonumber \left. \times\mathbf{K}_{M}\left(2\sqrt{\frac{Mb_M}{P}}\right) \right) e^{-M\epsilon} +1-e^{-M\epsilon}.
\end{eqnarray}
\end{proposition}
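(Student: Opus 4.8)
The plan is to condition on the random number $N$ of sources that are successfully decoded at the relay, and then to exploit two structural facts that already underlie the proof of Theorem~\ref{theorem1}. First, by the memoryless property of the exponential distribution, conditioned on source $j$ being decoded (i.e. $|h_j|^2>\epsilon$) the residual quantity $z_j\triangleq P|h_j|^2-a$ is exponential with mean $P$; hence, given $N=n$, the aggregate $S\triangleq\sum_{j}z_j$ that powers the relay (so that $P_r=\eta S$) is Erlang-distributed with density $\frac{s^{n-1}e^{-s/P}}{(n-1)!\,P^n}$ on $s>0$, and $\mathrm{Pr}(N=n)=\binom{M}{n}e^{-n\epsilon}(1-e^{-\epsilon})^{M-n}$. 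Second, under equal power allocation a decoded user $i$ is assigned power $P_{ri}=\eta S/n$, so it is in outage if and only if $|g_i|^2<\frac{na}{\eta S}=\frac{b_n}{S}$; since the $g_i$ are i.i.d. and independent of the $h_j$, these events are, conditionally on $S$, mutually independent, each of probability $1-e^{-b_n/S}$.

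For the best user, by the symmetry of the i.i.d.\ channels and of the allocation rule the best-performing user is in outage exactly when all $M$ users are in outage. When $N=0$ this has probability $(1-e^{-\epsilon})^M$, the last term of the claimed expression. When $N=n\ge 1$, the $M-n$ undecoded users are automatically in outage while, given $S$, the $n$ decoded users are all in outage with probability $(1-e^{-b_n/S})^n$; averaging over $S$ and summing over $n$ with weight $\mathrm{Pr}(N=n)$ expresses $\mathcal{P}_{best,II}$ as a sum of expectations of $(1-e^{-b_n/S})^n$. Expanding $(1-e^{-b_n/s})^n=\sum_{i=0}^n\binom{n}{i}(-1)^i e^{-ib_n/s}$ and integrating term by term against the Erlang density via Eq.~(3.324.1) of \cite{GRADSHTEYN},
\begin{eqnarray}
\int_0^\infty s^{\nu-1}e^{-\beta/s-\gamma s}\,ds=2\left(\frac{\beta}{\gamma}\right)^{\nu/2}\mathbf{K}_\nu\!\left(2\sqrt{\beta\gamma}\right),\nonumber
\end{eqnarray}
with $\nu=n$, $\beta=ib_n$, $\gamma=1/P$ (the $i=0$ term equalling $(n-1)!P^n$, the $\beta\to 0$ limit of the same formula), reproduces the stated $\mathcal{P}_{best,II}$ once the prefactors $\frac{M!}{n!(M-n)!}$ and $\frac{2}{(n-1)!}$ are collected.

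For the worst user, the worst-performing user is in outage exactly when at least one of the $M$ users is in outage. If $N<M$ (probability $1-e^{-M\epsilon}$) there is an undecoded user and the event occurs with certainty; if $N=M$ (probability $e^{-M\epsilon}$) the event occurs iff at least one decoded user fails, which given $S$ has conditional probability $1-e^{-Mb_M/S}$ because the $M$ decoded users succeed independently with probability $e^{-b_M/S}$ each. Hence $\mathcal{P}_{worst,II}=(1-e^{-M\epsilon})+e^{-M\epsilon}\bigl(1-\mathbb{E}[e^{-Mb_M/S}]\bigr)$ with $S$ Erlang-distributed with parameters $M$ and $1/P$, and evaluating $\mathbb{E}[e^{-Mb_M/S}]$ by the same integral identity with $\nu=M$, $\beta=Mb_M$, $\gamma=1/P$ yields the stated expression.

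The derivation is essentially bookkeeping once the conditioning is in place; the point that needs care is the justification of the conditional independence used above — that after fixing $N$ and the aggregate relay power $S$ the outage events of the decoded users decouple — together with the correct accounting of the automatically-outaged undecoded users in the boundary cases $N=0$ (best user) and $N<M$ (worst user). The remaining step, the binomial expansion followed by term-by-term application of the modified Bessel integral, is routine but must be handled so that the combinatorial factors land exactly as written.
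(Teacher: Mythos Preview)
Your proposal is correct and follows essentially the same approach as the paper: condition on $N$, use the Erlang-type density for the harvested-power sum derived in the proof of Theorem~\ref{theorem1}, then for the best user compute $\mathbb{E}[(1-e^{-b_n/S})^n]$ via binomial expansion and the Bessel integral, and for the worst user handle $N<M$ trivially and evaluate $1-\mathbb{E}[e^{-Mb_M/S}]$ when $N=M$. The only cosmetic difference is that the paper phrases the extremal events through the order statistics $|g_{(n)}|^2$ and $|g_{(1)}|^2$ rather than through the equivalent ``all fail'' / ``at least one fails'' formulation you use.
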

\begin{proof}
Suppose that  there are $n$ sources whose messages can be reliably received by the relay. Among these $n$ users, order the relay-destination channels as
$
 g_{(1)}\leq \cdots\leq  g_{(n)},
 $
 and  the outage performance for the best outage performance can be expressed as
 \begin{eqnarray}
\nonumber
\mathcal{P}_{best,II} &=&\sum^{M}_{n=1} \mathrm{Pr}\left( \frac{1}{2}\log\left(1+\frac{P_r}{N}|g_{(N)}|^2\right)<R, N=n\right)\\ \nonumber &&+\mathrm{Pr}\left(N=0\right)\\ \nonumber
&=&\sum^{M}_{n=1} \mathrm{Pr}\left( \frac{1}{2}\log\left(1+\frac{P_r}{N}|g_{(N)}|^2\right)<R| N=n\right)\\  \label{eq bes} &&\times \mathrm{Pr}(N=n)+\mathrm{Pr}\left(N=0\right).
\end{eqnarray}
By applying the density function of $\sum^{N}_{i=1}| {h}_{n_i}|^2$ shown in the proof for Theorem \ref{theorem1}, the best outage probability can be expressed as \begin{eqnarray}\nonumber
\mathcal{P}_{best,II}  &=&  \sum^{M}_{n=1}\mathrm{Pr}(N=n)\cdot\int_{n\epsilon } ^{\infty} \left(1- e^{-\frac{b_n}{Py-na}}\right)^n \\ \nonumber&& \times \frac{1}{(n-1)!} (y-n\epsilon)^{n-1} e^{-(y-\epsilon)}dy  +\mathrm{Pr}\left(N=0\right).
\end{eqnarray}
By applying the binomial coefficients and \eqref{eq32}, the best outage probability can be obtained as shown in the proposition. The worst outage probability can be expressed as
 \begin{eqnarray}\nonumber
\mathcal{P}_{worst,II} &=&  \mathrm{Pr}\left( \frac{1}{2}\log\left(1+\frac{P_r}{M}|g_{(1)}|^2\right)<R, N=n\right)\\   &&+\sum^{M-1}_{n=0}\mathrm{Pr}\left(N=n\right).
\end{eqnarray}
Note that
 \begin{eqnarray}\label{eq 4}
\sum^{M-1}_{n=1}\mathrm{Pr}\left(N=n\right) = 1- \mathrm{Pr}(N=M)=1-e^{-M\epsilon}.
\end{eqnarray}
 Combining  the density function of $\sum^{N}_{i=1}| {h}_{n_i}|^2$ shown in the proof for Theorem \ref{theorem1}, and the results in \eqref{eq32} and \eqref{eq 4},   the  probability can be evaluated   and the proposition is proved.
 \end{proof}

\subsection{Sequential water filling based power allocation  strategy}
Provided that the relay has   access to   global channel state information, a more efficient strategy that    maximizes the number of successful destinations can be designed as follows.
First recall that in order to ensure the successful detection at the $i$-th destination, the relay needs to allocate the relaying transmission power $P_{i, targeted} = \frac{2^{2R}-1}{|g_i|^2}$ to the $i$-th destination. Suppose    that $n$ sources can deliver their information to the relay reliably, and the required relaying transmission power for these $n$ destinations can be ordered as
\[
  \frac{2^{2R}-1}{|g_{(1)}|^2} \geq \dots \geq  \frac{2^{2R}-1}{|g_{(n)}|^2}.
\]
The sequential water filling power allocation strategy is described in the following. The relay first serves the destination with the strongest channel   by allocating    power $\frac{2^{2R}-1}{|g_{(n)}|^2}$ to it, if the total harvested energy at the relay is larger than or equal to $\frac{2^{2R}-1}{|g_{(n)}|^2}$ . And then the relay  tries to serve the destination with the second strongest channel with the power $\frac{2^{2R}-1}{|g_{(n-1)}|^2}$, if possible.  Such a power allocation strategy continues until either all users are served or there is not enough   power left at the relay. If there is any power left, such energy is reserved at the relay, where it is assumed that the capacity of the relay battery is infinite.

The probability of having  $m$ successful receivers among $n$ users can be expressed as
\begin{eqnarray}\nonumber
  \mathrm{Pr}\left( \sum^{m}_{i=1} \frac{2^{2R}-1}{|g_{(n-i+1)|^2}}<P_r, \sum^{m+1}_{i=1} \frac{2^{2R}-1}{|g_{(n-i+1)}|^2}>P_r , N=n\right),
\end{eqnarray}
from which  the averaged number of   successful destinations can be calculated by carrying out the summation among all  possible choices of $m$ and $n$.
Evaluating  the above expression is quite challenging, mainly because of the complexity of the density function of the sum of inverse exponential variables. However, explicit    analytical results for  such a power allocation scheme can be obtained based on other criteria. Particularly we are interested in the outage performance achieved by the water filling  strategy.

Although such a water filling  power allocation scheme is designed to maximize the number of the successful destinations, it can also minimize the outage probability for the user with the best channel conditions, since such a user   is the first to be served and has the access to the maximal relaying power. The following proposition provides an explicit expression of such a outage probability.
\begin{proposition}\label{proposition x}
With the sequential water filling power allocation strategy, the outage probability for the user with the best channel conditions is
\begin{eqnarray}
\mathcal{P}_{best,III} =\nonumber
\sum^{M}_{n=1} \frac{2}{(n-1)!}\sum^{n}_{i=1}{n \choose i} (-1)^i   \left(\frac{i\tilde{b}}{P}\right)^{\frac{n}{2}}\mathbf{K}_{n}\left(2\sqrt{\frac{i\tilde{b}}{P}}\right)  \\ \nonumber \times\frac{M!}{n!(M-n)!} e^{-n\epsilon}\left(1-e^{-\epsilon}\right)^{M-n} +\left(1-e^{-\epsilon}\right)^M ,
\end{eqnarray}
where $\tilde{b}=\frac{a}{\eta}$.
\end{proposition}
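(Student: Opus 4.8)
The plan is to exploit the defining feature of the sequential water-filling rule: among the $N$ sources decoded at the relay, the destination with the \emph{strongest} relay–destination channel, with gain $|g_{(N)}|^{2}$, is served first and therefore has unconstrained access to the entire accumulated power $P_r$. Hence this user is successful precisely when $P_r|g_{(N)}|^{2}\ge a$, with $a=2^{2R}-1$, \emph{independently} of how the residual power is subsequently split among the remaining users. This decoupling — not any calculation — is the crux of the argument: the law of the power left after serving the top user involves sums of \emph{reciprocals} of exponentials and is intractable, but the top user never sees it. Conditioning on $N$ (a single-user decoding succeeds with probability $e^{-\epsilon}$, $\epsilon=a/P$), I would write
\[
\mathcal{P}_{best,III} = \sum_{n=1}^{M}\mathrm{Pr}\!\left(P_r|g_{(n)}|^{2} < a,\; N=n\right) + \mathrm{Pr}(N=0),
\]
where the last term is $(1-e^{-\epsilon})^{M}$ (no decoded source $\Rightarrow$ the best user is trivially in outage).

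For a fixed $n\ge 1$ I would reuse the same ingredients as in the proof of Theorem~\ref{theorem1}. Given $N=n$, the $n$ decoded source–relay gains are i.i.d.\ unit-mean exponentials conditioned to exceed $\epsilon$, so by memorylessness $Y\triangleq\sum_{i=1}^{n}|h_{n_i}|^{2}$ has the shifted Erlang density $f_Y(y)=\frac{(y-n\epsilon)^{n-1}}{(n-1)!}e^{-(y-n\epsilon)}$ on $y>n\epsilon$, and on this event $P_r=\eta\sum_{i=1}^{n}\bigl(P|h_{n_i}|^{2}-a\bigr)=\eta(PY-na)$. Since the relay–destination gains are independent of everything on the source side, $|g_{(n)}|^{2}$ is the maximum of $n$ i.i.d.\ unit-mean exponentials, with CDF $t\mapsto(1-e^{-t})^{n}$. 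With $\mathrm{Pr}(N=n)=\binom{M}{n}e^{-n\epsilon}(1-e^{-\epsilon})^{M-n}$, the $n$-th summand becomes
\[
\mathrm{Pr}(N=n)\int_{n\epsilon}^{\infty}\!\left(1-e^{-\frac{a}{\eta(Py-na)}}\right)^{\!n} f_Y(y)\,dy .
\]

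To evaluate the integral I would substitute $u=y-n\epsilon$ (so $Py-na=Pu$), expand the $n$-th power by the binomial theorem, and reduce the $i$-th term to $\frac{1}{(n-1)!}\int_{0}^{\infty}u^{n-1}e^{-u-\frac{i\tilde b}{P}\frac1u}\,du$ with $\tilde b=a/\eta$. Each of these is the modified-Bessel integral $\int_{0}^{\infty}u^{\nu-1}e^{-u-\beta/u}\,du=2\beta^{\nu/2}\mathbf{K}_{\nu}(2\sqrt{\beta})$ already used for Theorem~\ref{theorem1}, producing the term $\frac{2}{(n-1)!}\bigl(\frac{i\tilde b}{P}\bigr)^{n/2}\mathbf{K}_{n}\!\bigl(2\sqrt{\tfrac{i\tilde b}{P}}\bigr)$. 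Summing over $i$ and $n$, attaching the Bernoulli weights, and using the small-argument limit $\beta^{\nu/2}\mathbf{K}_{\nu}(2\sqrt{\beta})\to\frac{(\nu-1)!}{2}$ to handle the $i=0$ contribution (which merges with the $N=0$ term into the constant part) reproduces the claimed closed form, exactly along the lines of the proof of Proposition~\ref{corollary}.

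In short: step one is the decoupling observation, which is the only non-mechanical ingredient; step two is the conditional joint law of $(P_r,|g_{(n)}|^{2})$ given $N=n$, taken from Theorem~\ref{theorem1}; step three is the binomial expansion plus the standard Bessel integral. The only place requiring care is the bookkeeping of the $i=0$ binomial term against the $N=0$ event; everything else is routine.
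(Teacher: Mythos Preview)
Your proposal is correct and follows essentially the same route as the paper: the paper writes $\mathcal{P}_{best,III}=\sum_{n=1}^{M}\mathrm{Pr}\bigl(P_r<\tfrac{a}{|g_{(n)}|^{2}},\,N=n\bigr)+\mathrm{Pr}(N=0)$ and then simply says ``following steps similar to those in the proofs of Theorem~\ref{theorem1} and Proposition~\ref{corollary}'', which is exactly the shifted-Erlang density for $P_r$, the order-statistic CDF for $|g_{(n)}|^{2}$, binomial expansion, and the Bessel integral you outline. Your explicit remark that the $i=0$ binomial contribution must be reconciled with the $N=0$ term is in fact the only bookkeeping the paper suppresses.
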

\begin{proof}
 The outage probability for the user with the best channel channel conditions is
\begin{eqnarray}
\label{eq80}
\mathcal{P}_{best,III} =  \sum^{M}_{n=1} \mathrm{Pr}\left( \sum^{n}_{i=1}\eta \left(P_s|h_{i}|^2-  a \right) <    \frac{2^{2R}-1}{|g_{(n)}|^2} ,\right.\\\nonumber \left.N=n\right) + \mathrm{Pr}(N=0).
\end{eqnarray}
Following   steps similar to those steps in the proofs for Theorem \ref{theorem1} and Proposition \ref{corollary},   the   probability in Proposition \ref{proposition x} can be evaluated, and the details of such  algebraic  manipulations are omitted.
\end{proof}

The optimality of the water filling scheme in terms of the number of successful destinations and   the   performance for the user with the best channel conditions is straightforward to demonstrate. However, it is surprising   that the performance of the water filling scheme for the user with the worst outage probability is the same as that attained for the worst user with the optimal strategy,
as shown in the following lemma.
\begin{lemma}\label{lemma max min}
Denote by  $\mathcal{P}_i(\sf{s})$   the outage probability for the $i$-th user achieved by a power allocation strategy $\sf{s}$, where $\sf{s}\in \mathcal{S}$ and $\mathcal{S}$ contains all possible strategies. Define $\mathcal{P}_{worst}(\sf{s}) \triangleq \max\{\mathcal{P}_i(\sf{s}), i\in\{1, \cdots, M\}\}$  and $\mathcal{P}_{worst, III}$ as the worst user performance achieved by the sequential water filling scheme.  $\mathcal{P}_{worst, III}=\min\{\mathcal{P}_{worst}(\sf{s}), \sf{s}\in\mathcal{S}\}$  holds.
\end{lemma}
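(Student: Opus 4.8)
The plan is to prove that $\mathcal{P}_{worst}(\sf{s})\ge\mathcal{P}_{worst,III}$ for every $\sf{s}\in\mathcal{S}$ and that the sequential water filling scheme attains this bound; since that scheme is itself an element of $\mathcal{S}$, the equality $\mathcal{P}_{worst,III}=\min_{\sf{s}\in\mathcal{S}}\mathcal{P}_{worst}(\sf{s})$ follows at once. The bridge is the expected number of destinations in outage. Let $A_i(\sf{s})$ denote the event that destination $i$ is in outage under $\sf{s}$, and let $F(\sf{s})$ count the destinations in outage; by linearity of expectation, $\mathbb{E}[F(\sf{s})]=\sum_{i=1}^{M}\mathcal{P}_i(\sf{s})$, so
\[
\mathcal{P}_{worst}(\sf{s})=\max_{1\le i\le M}\mathcal{P}_i(\sf{s})\;\ge\;\frac{1}{M}\sum_{i=1}^{M}\mathcal{P}_i(\sf{s})\;=\;\frac{1}{M}\,\mathbb{E}\bigl[F(\sf{s})\bigr].
\]

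First I would show that sequential water filling minimizes $F$ realization by realization. Condition on the channel gains: the relay then knows the harvested budget $P_r$ and, for each decoded source ($|h_j|^2>\epsilon$), the exact transmit power $\frac{2^{2R}-1}{|g_j|^2}$ needed to take destination $j$ out of outage; a destination whose source is not decoded, or which receives strictly less than this amount, is in outage. It is never helpful to give a destination a partial allocation or to spend power on an undecoded destination, so, for a fixed realization, any strategy amounts to selecting a subset $S$ of decoded destinations with $\sum_{j\in S}\frac{2^{2R}-1}{|g_j|^2}\le P_r$ and leaving $M-|S|$ in outage. Maximizing $|S|$ is the elementary problem of fitting the most items under a single budget, which is solved by clearing the cheapest destinations first: if some feasible subset of size $k$ is not the set of the $k$ cheapest, then swapping in the $k$ cheapest keeps the total cost no larger and still feasible. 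That greedy rule is precisely sequential water filling (strongest $|g_j|^2$ served first), so $F(\text{SWF})\le F(\sf{s})$ pointwise; taking expectations yields $\mathbb{E}[F(\text{SWF})]\le\mathbb{E}[F(\sf{s})]$ for every $\sf{s}\in\mathcal{S}$, with the same conclusion for randomized strategies after conditioning on the channels.

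Next I would invoke exchangeability to make the displayed bound tight for water filling. The water filling allocation depends on the user indices only through the ordering of the $|g_j|^2$'s and the decoding indicators, hence it is a symmetric function of the i.i.d.\ pairs $\{(h_j,g_j)\}_{j=1}^{M}$; therefore $\mathcal{P}_j(\text{SWF})$ does not depend on $j$, and
\[
\mathcal{P}_{worst,III}=\max_{1\le j\le M}\mathcal{P}_j(\text{SWF})=\frac{1}{M}\sum_{j=1}^{M}\mathcal{P}_j(\text{SWF})=\frac{1}{M}\,\mathbb{E}[F(\text{SWF})].
\]
Combining the three facts, for every $\sf{s}\in\mathcal{S}$ we get
\[
\mathcal{P}_{worst}(\sf{s})\;\ge\;\frac{1}{M}\mathbb{E}[F(\sf{s})]\;\ge\;\frac{1}{M}\mathbb{E}[F(\text{SWF})]\;=\;\mathcal{P}_{worst,III},
\]
and since $\text{SWF}\in\mathcal{S}$ attains the last quantity, the lemma follows.

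The step needing the most care is the realization-wise optimality of the greedy water filling rule: one must justify reducing an arbitrary power profile to an all-or-nothing allocation on the decoded destinations, and then give a clean exchange argument for the resulting cardinality maximization under one budget constraint. The other ingredients --- that a maximum is at least an average, and that water filling is invariant under a relabeling of the i.i.d.\ channels --- are routine.
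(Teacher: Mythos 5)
Your chain is internally coherent, but it terminates at the wrong quantity, so it does not prove the paper's lemma. Write $A_i(\mathsf{s})$ for the event that destination $i$ is in outage under strategy $\mathsf{s}$. Throughout the paper --- in particular in \eqref{eq8} and in Propositions \ref{lemma worst} and \ref{lemma worst2} --- $\mathcal{P}_{worst,III}$ denotes the outage probability of the realization-dependent worst user, i.e. $\mathrm{Pr}\bigl(\cup_{i=1}^M A_i\bigr)=\mathbb{E}[\max_i\mathbf{1}_{A_i}]=\mathrm{Pr}(\min_i R_{d,i}<R)$: under water filling the worst user is served last, so it fails exactly when the relay cannot serve everyone. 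Your final step instead identifies $\mathcal{P}_{worst,III}$ with $\max_i\mathrm{Pr}(A_i(\mathrm{SWF}))=\frac{1}{M}\mathbb{E}[F(\mathrm{SWF})]$ via exchangeability of the fixed labels. These two numbers are not equal: under water filling the labelled users' outage events are genuinely different (the best-channel user is served first, the worst last), so $\mathrm{Pr}(\cup_i A_i)>\max_i\mathrm{Pr}(A_i)$ strictly whenever it is possible that some but not all destinations are served. What you have proved is that water filling minimizes the maximum of the per-user \emph{marginal} outage probabilities --- a true but different statement, whose optimal value $\frac{1}{M}\mathbb{E}[F(\mathrm{SWF})]$ contradicts the expression \eqref{eq8} that the rest of Section \ref{section centralized} analyzes. (Your greedy exchange argument that water filling pointwise minimizes the number of failed destinations is correct, but it is the ``maximal number of successful destinations'' claim, not this lemma; the max of expectations and the expectation of the max do not commute here.)

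The paper's route is different: it solves the per-realization max--min rate problem \eqref{maximi worst} via the KKT conditions, obtaining the equal-rate allocation $P_{ri}=(2^{2t}-1)/|g_i|^2$ with $t$ given by \eqref{eq 5}; since this allocation pointwise maximizes $\min_i R_{d,i}$, it minimizes $\mathrm{Pr}(\min_i R_{d,i}<R)$ over all strategies, and its outage event in \eqref{eq7} is, after clearing the denominator, literally the same event as the water filling failure event in \eqref{eq8}. If you want to keep the flavor of your argument, replace the average-counting step by a pointwise statement about the maximum rather than the sum of the indicators: for any strategy, all $M$ users succeed only if $N=M$ and $P_r\ge\sum_{i}(2^{2R}-1)/|g_i|^2$, and under exactly these conditions water filling also serves everyone; hence $\max_i\mathbf{1}_{A_i}(\mathrm{SWF})\le\max_i\mathbf{1}_{A_i}(\mathsf{s})$ almost surely, and taking expectations yields the lemma with the correct right-hand side.
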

\begin{proof}
See the appendix.
\end{proof}
Given the form in \eqref{eq8},  it is quite challenging to find   exact expression for such an outage probability, for  the following reason. Denote $z_i=\frac{1}{|g_i|^2}$. Since the channels are Rayleigh faded, the     probability density and cumulative distribution  functions of $z_i$ can be obtained as follows:
\begin{equation}\label{exxx}
f_{z_i}(z) = \frac{1}{z^2}e^{-\frac{1}{z}}, F_{z_i}(z) =  e^{-\frac{1}{z}}.
\end{equation}
Obtaining an  exact expression for  \eqref{eq8} requires the density function of $\sum^{M}_{i=1} z_i$, which is the sum of inverse exponential variables. The Laplace transform for the density function of an individual $z_i$ is $\mathcal{L}_{z_i}(s)=2\sqrt{s}\mathbf{K}_1(2\sqrt{s})$, so that  the Laplace transform for the overall sum is $\mathcal{L}_{\sum^{M}_{i=1} z_i }(s)$ is $2^M {s}^{\frac{M}{2}}\mathbf{K}^M_1(2\sqrt{s})$, a form difficult to invert.  There are a few existing results regarding to the sum of inverted Gamma/chi-square distributed variables \cite{Witkovsky01,Saleh07}; however, the case with $2$ degrees of freedom, i.e. inverse exponential variables, is still   an open problem, partly  due to the fact that its moments are not bounded.
The following proposition provides   upper and lower bounds of the outage performance of the users with the worst channel conditions.
\begin{proposition}\label{lemma worst}
The outage probability for the user with the worst channel conditions achieved by the water filling strategy can be upper bounded  by
\begin{eqnarray}\nonumber
\mathcal{P}_{worst, III}<e^{-M\epsilon} \int_{0}^{\infty}\left(1-       e^{-\frac{M^2}{w}}   \right. \\   -M\left.\int^{w}_{\frac{w}{M}}   \frac{ e^{-\frac{(M-1)^2}{w-v}-\frac{1}{v}}}{v^2} dv\right)f_w(w)dw  +1-e^{-M\epsilon} ,\label{upperbound 1}
\end{eqnarray}
and lower bounded by
\begin{eqnarray}\label{lowerbound1}
\mathcal{P}_{worst, III}> \left(1 -  \frac{2}{(M-1)!}  \left(\frac{M\epsilon }{\eta}\right)^{\frac{M}{2}}\mathbf{K}_{M}\left(2\sqrt{\frac{M \epsilon}{\eta}}\right)\right)\\ \nonumber \times e^{-M\epsilon} +1-e^{-M\epsilon},
\end{eqnarray}
where
 $f_w(w) = \frac{1}{(M-1)!}\left(\frac{\epsilon}{\eta}\right)^M w^{M-1} e^{-\frac{\epsilon}{\eta}w}$.
\end{proposition}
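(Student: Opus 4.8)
The plan is to begin from the exact expression \eqref{eq8} and reduce the evaluation of $\mathcal{P}_{worst, III}$ to a statement about $\sum_{i=1}^{M} z_i$, where $z_i = 1/|g_i|^2$ has the density and CDF in \eqref{exxx}. Conditioning on the number $N$ of sources the relay decodes successfully: whenever $N<M$ at least one destination receives nothing, so the worst user is in outage and this event contributes $\mathrm{Pr}(N<M)=1-e^{-M\epsilon}$; whenever $N=M$, the sequential water filling serves the weakest destination last, so that destination fails exactly when the total power required to serve all of them, $a\sum_{i=1}^{M}z_i$, exceeds the harvested power $P_r$. Writing $w=P_r/a$, whose density is $f_w$ and which is independent of the $z_i$, this gives $\mathcal{P}_{worst,III}=1-e^{-M\epsilon}+e^{-M\epsilon}\int_{0}^{\infty}\mathrm{Pr}\!\left(\sum_{i=1}^{M}z_i>w\right)f_w(w)\,dw$, and it remains to sandwich the inner probability for each fixed $w$.

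For the \emph{lower} bound on $\mathcal{P}_{worst,III}$ I would use the elementary inclusion $\{\sum_{i=1}^{M}z_i\le w\}\subseteq\bigcap_{i=1}^{M}\{z_i\le w\}$, which by \eqref{exxx} gives $\mathrm{Pr}(\sum_{i=1}^{M}z_i\le w)\le\big(e^{-1/w}\big)^{M}=e^{-M/w}$, hence $\mathrm{Pr}(\sum_{i=1}^{M}z_i>w)\ge 1-e^{-M/w}$. Substituting this into the identity above reduces matters to evaluating $\int_{0}^{\infty}e^{-M/w}f_w(w)\,dw=\frac{1}{(M-1)!}\left(\frac{\epsilon}{\eta}\right)^{M}\int_{0}^{\infty}w^{M-1}e^{-M/w-(\epsilon/\eta)w}\,dw$; the last integral is in closed form by the standard formula $\int_{0}^{\infty}x^{\nu-1}e^{-\beta/x-\gamma x}\,dx=2(\beta/\gamma)^{\nu/2}\mathbf{K}_{\nu}(2\sqrt{\beta\gamma})$ from \cite{GRADSHTEYN} with $\nu=M$, $\beta=M$, $\gamma=\epsilon/\eta$, and collecting the constants produces \eqref{lowerbound1}.

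For the \emph{upper} bound the idea is to exhibit disjoint sub-events of $\{\sum_{i=1}^{M}z_i\le w\}$ whose probabilities can be written down explicitly. Let $B_0=\{z_i\le w/M\ \text{for all } i\}$, and for $j=1,\dots,M$ let $B_j=\{\,z_j>w/M\ \text{and}\ z_i\le(w-z_j)/(M-1)\ \text{for all } i\ne j\,\}$. On $B_j$ one has $\sum_{i}z_i\le z_j+(M-1)\cdot\frac{w-z_j}{M-1}=w$, so $B_j\subseteq\{\sum_i z_i\le w\}$; moreover $z_j>w/M$ on $B_j$ forces $(w-z_j)/(M-1)<w/M$, hence every $z_i$ with $i\ne j$ is strictly below $w/M$, which makes $B_0,B_1,\dots,B_M$ pairwise disjoint. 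Finite additivity and \eqref{exxx} then give $\mathrm{Pr}(\sum_{i=1}^{M}z_i\le w)\ge\mathrm{Pr}(B_0)+\sum_{j=1}^{M}\mathrm{Pr}(B_j)=e^{-M^2/w}+M\int_{w/M}^{w}\frac{e^{-1/v}}{v^2}\big(e^{-(M-1)/(w-v)}\big)^{M-1}\,dv$. Passing to the complementary event and integrating against $f_w$ reproduces \eqref{upperbound 1}.

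The two integrations against $f_w$ are routine (Gamma-type integrals, one reducible to a Bessel function via \cite{GRADSHTEYN}), and the identity at the end of the first paragraph is essentially a rereading of \eqref{eq8}. The one genuinely delicate step is the combinatorial construction behind the upper bound: one must check both that each $B_j$ lies inside $\{\sum_i z_i\le w\}$ (an averaging inequality) and, the subtler point, that the $B_j$ are mutually disjoint, which rests on the observation that on $B_j$ every coordinate other than the $j$-th is automatically below $w/M$. Once this is in place the bound follows from a single application of additivity.
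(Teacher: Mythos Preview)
Your argument is correct and yields exactly the bounds stated. The lower bound is handled identically to the paper: the inclusion $\{\sum_i z_i\le w\}\subseteq\bigcap_i\{z_i\le w\}$ is precisely the paper's inequality $\mathrm{Pr}(z_{(M)}>w)<\mathrm{Pr}(\sum_i z_i>w)$, and the Bessel integral is the same one the paper evaluates via \cite{GRADSHTEYN}.

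For the upper bound, the paper and you arrive at the same quantity by slightly different bookkeeping. The paper bounds $\sum_i z_i\le z_{(M)}+(M-1)z_{(M-1)}$ and then computes $\mathrm{Pr}\bigl(z_{(M)}+(M-1)z_{(M-1)}\le w\bigr)$ using the joint density of the top two order statistics from \cite{David03}. Your disjoint decomposition $B_0\cup B_1\cup\cdots\cup B_M$ is in fact \emph{equal} (almost surely) to the event $\{z_{(M)}+(M-1)z_{(M-1)}\le w\}$: on $B_0$ one has $z_{(M)}\le w/M$; on $B_j$ the coordinate $z_j$ is the unique maximum and every other coordinate is at most $(w-z_j)/(M-1)$, so $z_{(M-1)}\le (w-z_{(M)})/(M-1)$; and conversely any point with $z_{(M)}+(M-1)z_{(M-1)}\le w$ falls into $B_0$ if $z_{(M)}\le w/M$ and into $B_j$ (with $j$ the index of the maximum) otherwise. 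So the two computations produce the same integral. Your route is a bit more elementary since it avoids quoting the order-statistics density formula and works directly with i.i.d.\ symmetry; the paper's formulation, on the other hand, makes the structural inequality ``largest plus $(M{-}1)$ times second largest'' explicit, which feeds directly into the tightness discussion after Proposition~\ref{lemma worst2} about why $z_{(M)}$ dominates the sum.
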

\begin{proof}
See the appendix.
\end{proof}
While the expression in \eqref{upperbound 1} can be evaluated by numerical methods, it is difficult to carry out asymptotic studies for such an expression with integrals, and the following proposition provides a bound slightly
looser than \eqref{upperbound 1}  that enables asymptotic analysis.
\begin{proposition}\label{lemma worst2}
The outage probability for the user with the worst channel conditions achieved by the water filling strategy can be upper bounded as follows:
\begin{align}\label{upperbound 2}
\mathcal{P}_{worst, III}&<1- e^{-M\epsilon}  \left(    \frac{2}{(M-1)!} \left(\frac{\epsilon M^2}{\eta}\right)^{\frac{M}{2}}\right. \\\nonumber   &\times \mathbf{K}_M\left(2\sqrt{\frac{\epsilon M^2}{\eta}}\right) +\left.\frac{M}{(M-1)!} \left(\frac{\epsilon}{\eta}\right)\right. \\ \nonumber &\times \left.\int^{M-1}_{c}    2  \left(\frac{a(y)\epsilon}{\eta}\right)^{\frac{M-1}{2}}\mathbf{K}_{M-1}\left(2\sqrt{\frac{a(y)\epsilon}{\eta}}\right)dy\right)    ,
\end{align}
where $a(y)= (y+1)\left(\frac{(M-1)^2+1}{y}\right)$ and $c$ is a constant to facilitate asymptotic analysis,  $c\in [0, M-1]$.
\end{proposition}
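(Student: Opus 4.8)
The plan is to build on the estimate \eqref{upperbound 1} of Proposition~\ref{lemma worst}. Since $\int_0^{\infty}f_w(w)\,dw=1$ (with $f_w$ as given in Proposition~\ref{lemma worst}), that estimate is equivalent to $\mathcal{P}_{worst,III}<1-e^{-M\epsilon}(A+B)$, where
\[
A\triangleq\int_0^{\infty}e^{-\frac{M^{2}}{w}}f_w(w)\,dw,\qquad B\triangleq M\int_0^{\infty}\left(\int_{w/M}^{w}\frac{e^{-\frac{(M-1)^{2}}{w-v}-\frac1v}}{v^{2}}\,dv\right)f_w(w)\,dw.
\]
The idea is to (a) evaluate $A$ in closed form, (b) turn $B$ into a single integral of Bessel functions, and (c) lower-bound that integral so as to land on the right-hand side of \eqref{upperbound 2}. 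Note that $A+B$ enters with a minus sign, so replacing $B$ by \emph{any} lower bound only loosens the resulting upper bound on $\mathcal{P}_{worst,III}$; the free constant $c$ is exactly the slack introduced by step (c).

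For (a), substitute $f_w(w)=\frac{1}{(M-1)!}(\epsilon/\eta)^{M}w^{M-1}e^{-(\epsilon/\eta)w}$ and apply the integral representation $\int_0^{\infty}x^{\nu-1}e^{-\beta/x-\gamma x}\,dx=2(\beta/\gamma)^{\nu/2}\mathbf{K}_\nu(2\sqrt{\beta\gamma})$ from \cite{GRADSHTEYN} (as in Proposition~\ref{lemma 1}), with $\nu=M$, $\beta=M^{2}$, $\gamma=\epsilon/\eta$. This gives $A=\frac{2}{(M-1)!}\big(\frac{\epsilon M^{2}}{\eta}\big)^{M/2}\mathbf{K}_M\!\big(2\sqrt{\frac{\epsilon M^{2}}{\eta}}\big)$, which is precisely the first term inside the parentheses of \eqref{upperbound 2}.

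The substantive work is (b)--(c). In the inner integral of $B$ I would substitute $v=\frac{w}{y+1}$, so that $\frac{dv}{v^{2}}=-\frac{dy}{w}$, the endpoints $v=w/M$ and $v=w$ correspond to $y=M-1$ and $y=0$, and the two exponents combine into $-\frac1w\,\beta(y)$ with $\beta(y)=\frac{(y+1)\big((M-1)^{2}+y\big)}{y}$. Exchanging the order of integration is legitimate by absolute convergence, the factor $e^{-(\epsilon/\eta)w}$ in $f_w$ being precisely what tames the unbounded moments of $1/|g_i|^{2}$ flagged before \eqref{upperbound 1}; the remaining $w$-integral is again of the form above with $\nu=M-1$, and yields
\[
B=\frac{M}{(M-1)!}\Big(\frac{\epsilon}{\eta}\Big)\int_0^{M-1}2\Big(\frac{\beta(y)\epsilon}{\eta}\Big)^{\frac{M-1}{2}}\mathbf{K}_{M-1}\!\Big(2\sqrt{\frac{\beta(y)\epsilon}{\eta}}\Big)\,dy.
\]
To reach \eqref{upperbound 2} I would then restrict the $y$-integration to $[c,M-1]$ (admissible for any $c\in[0,M-1]$ since the integrand is nonnegative, and this also moves away from the singularity of $\beta(y)$ at $y=0$) and replace $\beta(y)$ by $a(y)=(y+1)\frac{(M-1)^{2}+1}{y}$. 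The lever is the identity $\frac{d}{dx}\big[x^{\nu}\mathbf{K}_\nu(x)\big]=-x^{\nu}\mathbf{K}_{\nu-1}(x)<0$: the map $x\mapsto 2(x\epsilon/\eta)^{(M-1)/2}\mathbf{K}_{M-1}(2\sqrt{x\epsilon/\eta})$ is decreasing, so enlarging its argument from $\beta(y)$ to $a(y)$ shrinks the integrand and only lowers $B$ further, which is the direction we need.

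The main obstacle is the pointwise comparison that licenses that last step. Writing $\beta(y)=(M-1)^{2}+1+y+\frac{(M-1)^{2}}{y}$ and $a(y)=(M-1)^{2}+1+\frac1y+\frac{(M-1)^{2}}{y}$ gives $a(y)-\beta(y)=\frac1y-y=\frac{1-y^{2}}{y}$, so $\beta(y)\le a(y)$ exactly when $y\le1$. For $M=2$ this holds on all of $[0,M-1]=[0,1]$ and \eqref{upperbound 2} follows directly. For $M>2$ the replacement overshoots on $y\in(1,M-1]$, but only by an amount that, at high SNR ($\epsilon\to0$, the regime for which $c$ is introduced), is dominated by the $\Theta(\epsilon)$ mass simultaneously removed from $B$ when its range is cut from $[0,M-1]$ to $[c,M-1]$; making this trade-off quantitative for a suitable $c$ is the delicate part. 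Everything else is routine: the two applications of the Bessel integral and the justification of the interchange of integration.
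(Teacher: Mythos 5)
Your route is the same as the paper's: start from the bound of Proposition \ref{lemma worst}, evaluate the $e^{-M^2/w}$ term with the Bessel integral, substitute $v=w/(y+1)$ in the inner integral $Q_5$, exchange the order of integration, and evaluate the resulting $w$-integral as a second Bessel function. Your steps (a) and (b) are correct and reproduce the paper's computation of $Q_6$ exactly. Where you part company with the paper is instructive: the change of variables genuinely produces the exponent $\beta(y)=(y+1)\frac{(M-1)^2+y}{y}$, as you say, whereas the paper's proof simply asserts the identity $\beta(y)=a(y)$ with $a(y)=(y+1)\frac{(M-1)^2+1}{y}$ and moves on; these agree only at $y=1$ (their difference is $\frac{1-y^{2}}{y}$, as you compute). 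So you have in fact located an algebra slip in the paper's own argument rather than a missing idea of your own.

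That said, your proposed repair does not close the proof of the proposition as literally stated. The monotonicity of $x\mapsto x^{\nu}\mathbf{K}_{\nu}(x)$ gives the needed direction only on $y\le 1$, i.e., only for $M=2$; for $M>2$ your suggestion that the overshoot incurred on $(1,M-1]$ is absorbed by truncating $[0,c)$ is plausible only asymptotically (for fixed $c>0$ and $\epsilon\to 0$, both effects sit inside an overall $\Theta(\epsilon)$ prefactor and the swap's cost is $o(1)$ relative to the truncation's gain), it fails outright at $c=0$, and you leave it unquantified. Since the proposition claims the bound for every $c\in[0,M-1]$ and every SNR, this is a genuine gap --- though one inherited from the source, whose proof contains the same unjustified step disguised as an equality. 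The clean resolution is to restate the proposition with $a(y)=(y+1)\frac{(M-1)^2+y}{y}$, after which your steps (a)--(b), together with the trivial observation that discarding the nonnegative integrand over $[0,c)$ only enlarges the bound, complete the argument with no case analysis at all and without affecting the subsequent high-SNR conclusions.
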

\begin{proof}
See the appendix.
\end{proof}
The upper bound in Proposition \ref{lemma worst} is a special case of the one in Proposition \ref{lemma worst2} by setting $c=0$ as shown in the appendix. The reason to use the parameter $c$ is to facilitate asymptotic analysis and ensure  that the factor $a(y)\epsilon$ approaches zero at high SNR, as illustrated  in the next section.

Recall that the two bounds in Proposition \ref{lemma worst} were developed based on \eqref{bounds used in worst}, which is recalled in the following:
\begin{eqnarray}
\mathcal{P}_w\left(   z_{(M)} >w \right) < \mathcal{P}_w\left(  \sum^{M}_{i=1} z_i >w \right) \\ \nonumber \leq \mathcal{P}_w\left(   z_{(M)} +(M-1)z_{(M-1)}>w \right),
\end{eqnarray}
where $z_i$ has been ordered as $z_{(1)}\leq \cdots \leq z_{(M)}$ and $w$ is a random variable related to the source-relay channels and transmission power. Intuitively  such bounds should be quite loose since the two order statistics, $z_{(M)}$ and $z_{(M-1)}$, are expected to  become the same with  large $M$.

However, as shown by the   simulation   in Section \ref{section numberical results}, such bounds are surprisingly   tight, even for large $M$. This is because  for the addressed scenario the statistical properties of $z_{(M)}$ and $z_{(M-1)}$ are very different. In the following  it will be shown that the expectations  of  $z_{(M-i)}$, $1\leq i \leq (M-1)$, \textit{are}, but the expectation of $z_{(M)}$ is \textit{not bounded}, for any fixed $M$.   The expectation of $z_{(M-1)}$ is
\begin{eqnarray}
\mathcal{E}_{z_{(M-1)}}&=& M(M-1)\int^{\infty}_{0} \frac{1}{x} e^{-\frac{M-1}{x}} \left(1-e^{-\frac{1}{x}}\right)dx\\ \nonumber &=& M(M-1) \sum^{\infty}_{k=1}\frac{(-1)^k}{k!} \int^{\infty}_{0} \frac{1}{x^{k+1}}e^{-\frac{M-1}{x}}dx,
\end{eqnarray}
where the first equality follows from the pdf of $z_{(M-1)}$, $f_{z_{(M-1)}}(x)=M(M-1)  \frac{e^{-\frac{M-1}{x}}}{x^2}  \left(1-e^{-\frac{1}{x}}\right)$, and the second equality follows from the series expansion of exponential functions. Furthermore we have
\begin{eqnarray}
\mathcal{E}_{z_{(M-1)}}&=&  M(M-1) \sum^{\infty}_{k=1}\frac{(-1)^k}{k!} \frac{1}{(M-1)^k} \\ \nonumber &<& M(M-1) \sum^{\infty}_{k=1}\frac{(-1)^k}{(M-1)^k} =(M-1)^2.
\end{eqnarray}
Therefore the moments of $ z_{(M-1)}$   are finite, but  the expectation of $z_{(M)}$ is not finite since
\begin{eqnarray}
\mathcal{E}_{z_{(M)}}= M\int^{\infty}_{0} \frac{1}{x} e^{-\frac{M}{x}}  dx = M \cdot \underset{t\rightarrow 0}{\lim}\mathbf{E}_i(-t)\rightarrow \infty.
\end{eqnarray}

%
\section{Asymptotic Analysis  of the Outage Performance }
In the previous sections,   exact expressions for the outage performance achieved by the addressed power allocation schemes have been developed.
Most of the these analytical results contain    Bessel functions, which makes it difficult to get any insight from the analytical results. In this section, high SNR asymptotic studies for the outage performance are  carried out.  To do this we need    asymptotic expression for $x^n\mathbf{K}_n(x)$, when $x\rightarrow 0$. By applying the series representation of Bessel functions,  $x^n\mathbf{K}_n(x)$ can be approximated as \cite{GRADSHTEYN}
\begin{eqnarray}\label{approximation}
x^n\mathbf{K}_n(x) &=& x^n(-1)^{n+1} \mathbf{I}_n(x)\left(\ln \frac{x}{2} +\mathbf{C}\right)\\ \nonumber && +\frac{1}{2}(-1)^n \sum^{\infty}_{l=0}\frac{\left(\frac{x}{2}\right)^{n+2l}x^n}{l!(n+l)!}\left( \sum^{l}_{k=1}\frac{1}{k} +\sum^{n+l}_{k=1}\frac{1}{k}  \right)\\ \nonumber &&+\frac{1}{2}\sum^{n-1}_{l=0}\frac{(-1)^l(n-l-1)!}{l!} \frac{x^{2l}}{2^{2l-n}} \\ \nonumber &\approx & \frac{1}{2}\sum^{n-1}_{l=0}\frac{(-1)^l(n-l-1)!}{l!} \frac{x^{2l}}{2^{2l-n}} +  x^{2n} q(\ln x),
\end{eqnarray}
for $n\geq 2$, where $\mathbf{I}_n(x)=\sum^{\infty}_{k=0} \frac{\left(\frac{x}{2}\right)^{n+2k}}{k! (n+k)!}$ and $q(\ln x)=  (-1)^{n+1}\frac{\ln \frac{x}{2} }{2^nn!}  $. For the case of $n=1$, we have
\begin{eqnarray}\label{approximation2}
x\mathbf{K}_1(x) &=&1+ x \mathbf{I}_1(x)\left(\ln \frac{x}{2} +\mathbf{C}\right) \\ \nonumber &&-\frac{1}{2} \sum^{\infty}_{l=0}\frac{\left(\frac{x}{2}\right)^{2l+1}x}{l!(l+2)!}\left( \sum^{l}_{k=1}\frac{1}{k} +\sum^{l+2}_{k=1}\frac{1}{k}  \right)\\ \nonumber &\approx & 1+ \frac{x^2}{2} \ln \frac{x}{2}   .
\end{eqnarray}
These approximations will be used for the following high SNR asymptotic analysis of the outage performance.
\subsection{Averaged outage performance }
According to Theorem \ref{theorem1},  the averaged performance achieved by the equal power allocation strategy can be expressed as
   \begin{eqnarray}\label{eq 222}
\mathcal{P}_{i,II}=\sum^{M}_{n=1} \frac{1}{(n-1)!}\left((n-1)! -  \frac{1}{2^{n-1}} \left(2\sqrt{\frac{b_n}{P}}\right) ^n \right. \\ \nonumber \left.\times \mathbf{K}_{n}\left(2\sqrt{\frac{b_n}{P}}\right) \right)  \frac{(M-1)!}{(n-1)!(M-n)!} e^{-n\epsilon}\\ \nonumber \times \left(1-e^{-\epsilon}\right)^{M-n} +\left(1-e^{-\epsilon}\right) \\ \nonumber \approx\nonumber
\sum^{M}_{n=1}  \left(1 -       \left( 1-\frac{1}{(n-1)} \left( {\frac{b_n}{P}}\right)  \right)
 \right) \frac{(M-1)!}{(n-1)!(M-n)!}\\ \nonumber\times   (1-n\epsilon) \epsilon^{M-n} + \epsilon  \\ \nonumber \approx
  \left(1+\frac{M}{(M-1)\eta}\right)\epsilon,
\end{eqnarray}
where only the first two factors containing $x^0$ and $x^2$ in \eqref{approximation} are used.
On the other hand, according to Proposition \ref{lemma 1}, the averaged outage performance achieved by the non-cooperative individual strategy is
\begin{eqnarray}\label{euq I}
\mathcal{P}_{i, I}
&\approx&  1 -  (1-\epsilon) \left( 1+  \frac{2\epsilon}{\eta} \ln  \sqrt{\frac{\epsilon}{\eta  }} \right) \\ \nonumber &\approx& \epsilon \left(1- \frac{2}{\eta} \ln  \sqrt{\frac{\epsilon}{\eta  }}\right).
\end{eqnarray}
An important observation from \eqref{eq 222} and \eqref{euq I} is that the averaged outage probability for the individual transmission scheme decays as $\frac{\log SNR}{SNR}$, where the equal power allocation scheme can achieve  better performance, i.e. a faster   rate of decay, $\frac{1}{SNR}$. Another aspect for comparison  is to study the normalized difference of the two probabilities. When $\epsilon\rightarrow 0$, we can approximate this difference as
\begin{eqnarray}
\frac{\mathcal{P}_{i, I}- \mathcal{P}_{i,II}}{\mathcal{P}_{i,II}}&\approx&   \frac{   \frac{1}{\eta} \ln   {\frac{1}{\epsilon\eta  }} -\frac{n}{2(n-1)\eta}  }{ 1+\frac{M}{(M-1)\eta}}>0.
\end{eqnarray}
This difference can be significant since the factor $\ln \frac{1}{\epsilon}$ approaches   infinity as $\epsilon \rightarrow 0$.  In terms of the averaged outage performance, the water filling strategy can also achieve  performance similar to that of the equal power allocation scheme, i.e., its averaged outage probability decays as $\epsilon$. Although we cannot obtain an explicit expression for the water filling strategy, the   rate of decay of $\frac{1}{SNR}$ can be  proved by studying  the outage probability  for the user with the worst channel conditions as shown in Section \ref{subsection worst}.

\subsection{Best outage performance}
Following the previous discussions about the averaged outage performance,  the best outage performance achieved by the individual transmission scheme can be approximated  as follows:
\begin{eqnarray}
\mathcal{P}_{best, I} = (\mathcal{P}_{i,I})^M \approx  \epsilon^M \left(1- \frac{2}{\eta} \ln  \sqrt{\frac{\epsilon}{\eta  }}\right)^M.
\end{eqnarray}
Comparing Proposition \ref{corollary} and Proposition \ref{proposition x}, we can see that  the equal power allocation scheme and the water filling scheme  achieve similar performance for the user with the best channel conditions. So in the following, we  focus only on the equal power allocation scheme. The following corollary provides a high SNR approximation of the best outage performance achieved by   equal power allocation.
\begin{proposition} \label{corollary 2}
With the equal power allocation scheme, the outage probability for the user with the best channel conditions can be approximated at high SNR by
  \begin{eqnarray}\nonumber
\mathcal{P}_{best,II}  &\approx&    \epsilon^M(1-c\ln \epsilon ),
\end{eqnarray}
where $c=\sum^{M}_{n=1} 2    \left(\frac{n}{\eta}\right)^n    \frac{M!}{(n-1)!n!(M-n)!} $ is a constant not depending  on $\epsilon$.
\end{proposition}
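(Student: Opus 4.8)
The plan is to start from the exact formula for $\mathcal{P}_{best,II}$ in Proposition~\ref{corollary} and let $\epsilon = \frac{a}{P}\to 0$, the regime corresponding to $\mathrm{SNR}\to\infty$. Each outer summand there is a product of a fixed combinatorial weight $\frac{2}{(n-1)!}\cdot\frac{M!}{n!(M-n)!}$, a channel factor $S_n := \sum_{i=0}^{n}{n\choose i}(-1)^{i}\left(\frac{ib_n}{P}\right)^{n/2}\mathbf{K}_n\!\left(2\sqrt{\frac{ib_n}{P}}\right)$ with $b_n = \frac{na}{\eta}$, and a probability factor $e^{-n\epsilon}(1-e^{-\epsilon})^{M-n}$. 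First I would expand the probability factor: $1-e^{-\epsilon} = \epsilon + O(\epsilon^{2})$ and $e^{-n\epsilon} = 1 + O(\epsilon)$ give $e^{-n\epsilon}(1-e^{-\epsilon})^{M-n} = \epsilon^{M-n}(1+O(\epsilon))$, and the trailing term likewise satisfies $(1-e^{-\epsilon})^{M} = \epsilon^{M} + O(\epsilon^{M+1})$. Consequently the $\epsilon^{M}\ln\epsilon$ behaviour claimed by the proposition can arise only from $S_n$ being of order $\epsilon^{n}\ln\epsilon$, and the entire problem reduces to establishing exactly that.

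The core step is the small-argument analysis of $S_n$. I would write $\left(\frac{ib_n}{P}\right)^{n/2}\mathbf{K}_n\!\left(2\sqrt{\frac{ib_n}{P}}\right) = \frac{1}{2^{n}}\,x^{n}\mathbf{K}_n(x)$ with $x = 2\sqrt{ib_n/P}$, so that $x^{2} = \frac{4in\epsilon}{\eta}$ is linear in $i$, and then substitute the series representation \eqref{approximation} for $n\ge 2$ (and \eqref{approximation2} in the separate case $n=1$). That representation exhibits $x^{n}\mathbf{K}_n(x)$ as a polynomial of degree $n-1$ in $x^{2}$ --- hence a polynomial of degree $n-1$ in $i$ --- plus the logarithmic term $x^{2n}q(\ln x)$, plus remainders of order $x^{2n+2}$ and higher (including the constant-times-$x^{2n}$ pieces carried by $\mathbf{C}$ and the harmonic sums in \eqref{approximation}). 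Then I would invoke the finite-difference identities $\sum_{i=0}^{n}{n\choose i}(-1)^{i}i^{k} = 0$ for $0\le k\le n-1$ and $\sum_{i=0}^{n}{n\choose i}(-1)^{i}i^{n} = (-1)^{n}n!$, together with the fact that $\sum_{i=0}^{n}{n\choose i}(-1)^{i}i^{n}\ln i$ is a finite constant not depending on $\epsilon$. These annihilate the entire degree-$(n-1)$ polynomial part of $x^{n}\mathbf{K}_n(x)$, so only the $i^{n}$-proportional logarithmic term survives the alternating sum; collecting the constants in $q(\ln x)$ and in $x^{2n} = (4in\epsilon/\eta)^{n}$ and splitting $\ln\frac{in\epsilon}{\eta} = \ln i + \ln\epsilon + \ln\frac{n}{\eta}$ shows that $S_n$ is of order $\left(\frac{n\epsilon}{\eta}\right)^{n}\ln\epsilon$, while the discarded higher-order remainders contribute only at order $\epsilon^{n+1}\ln\epsilon$ or below.

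With $S_n$ of order $\epsilon^{n}\ln\epsilon$ the assembly is mechanical: the $n$-th outer summand equals $\frac{2}{(n-1)!}\cdot\frac{M!}{n!(M-n)!}\cdot S_n\cdot\epsilon^{M-n}(1+O(\epsilon))$, which is of order $\epsilon^{M}\ln\epsilon$ uniformly in $n$; summing over $n = 1,\dots,M$, adding $(1-e^{-\epsilon})^{M} = \epsilon^{M} + O(\epsilon^{M+1})$, and retaining the dominant $\epsilon^{M}\ln\epsilon$ term next to the $\epsilon^{M}$ term yields $\mathcal{P}_{best,II}\approx\epsilon^{M}(1 - c\ln\epsilon)$, with $c = \sum_{n=1}^{M} 2\left(\frac{n}{\eta}\right)^{n}\frac{M!}{(n-1)!\,n!\,(M-n)!}$ read off by matching the $\epsilon^{M}\ln\epsilon$ coefficient term by term, the $i$-independent constants $\sum_{i}{n\choose i}(-1)^{i}i^{n}\ln i$ and the $\ln\frac{n}{\eta}$ shifts being absorbed into the $\epsilon^{M}$ coefficient carried by the ``$1$''. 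The hard part will be the core step: one must marry the somewhat intricate expansions \eqref{approximation}--\eqref{approximation2} with the finite-difference vanishing, carefully tracking which powers of $i$ occur in every term so as to confirm that all polynomial-in-$i$ contributions are killed and that the discarded higher-order pieces --- together with the $O(\epsilon)$ factors multiplying them --- genuinely land at order $\epsilon^{M}$, i.e. strictly below $\epsilon^{M}\ln\epsilon$. The remaining algebra is routine and I would omit it, as the proposition records only the leading $\epsilon^{M}\ln\epsilon$ and $\epsilon^{M}$ orders.
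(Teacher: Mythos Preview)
Your proposal is correct and follows essentially the same approach as the paper: start from the exact expression in Proposition~\ref{corollary}, insert the small-argument expansion \eqref{approximation}--\eqref{approximation2} of $x^n\mathbf{K}_n(x)$, and use the finite-difference identities $\sum_{i=0}^{n}\binom{n}{i}(-1)^i i^l=0$ for $l<n$ and $\sum_{i=0}^{n}\binom{n}{i}(-1)^i i^n=(-1)^n n!$ to kill the polynomial part and isolate the $\epsilon^n\ln\epsilon$ contribution from $S_n$. Your bookkeeping of the subleading pieces (the $i^n\ln i$ sum and the $\ln(n/\eta)$ shift landing at order $\epsilon^M$ rather than $\epsilon^M\ln\epsilon$) is in fact more explicit than what the paper records, but the argument is the same.
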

\begin{proof}
According to Proposition \ref{corollary},  the use of   equal power allocation yields the best outage performance among the $M$ users as in \eqref{approxi worst channel}.
\begin{figure*}
  \begin{eqnarray}
\mathcal{P}_{best,II} &=&\nonumber   2\sum^{M}_{n=1} \frac{1}{(n-1)!}\sum^{n}_{i=0}{n \choose i} (-1)^i \left(\frac{ib}{P}\right)^{\frac{n}{2}}\mathbf{K}_{n}\left(2\sqrt{\frac{ib}{P}}  \right)  \frac{M!}{n!(M-n)!} e^{-n\epsilon}\left(1-e^{-\epsilon}\right)^{M-n} +\left(1-e^{-\epsilon}\right)^M   \\ \nonumber &\approx&  \sum^{M}_{n=1} \frac{1}{2^{n-1}(n-1)!}\sum^{n}_{i=0}{n \choose i} (-1)^i \left( 2^{n-1}\sum^{n-1}_{l=0}\frac{(-1)^l(n-l-1)!}{l!}\left(\frac{ib}{P}\right)^l\right. \\ \label{approxi worst channel} &&+\left. \left(\frac{ib}{P}\right)^n 2^{2n} q(\ln x) \right)  \frac{M!}{n!(M-n)!} (1-n\epsilon) \epsilon^{M-n} +\epsilon^M.
\end{eqnarray}
\end{figure*}
Recall that the sum of binomial coefficients has the following properties \cite{GRADSHTEYN}:
  \begin{eqnarray}
  \sum^{n}_{i=0}{n \choose i} (-1)^{i}   i^l &=& 0,
\end{eqnarray}
for $0\leq l < n$, and
  \begin{eqnarray}
  \sum^{n}_{i=0}{n \choose i} (-1)^{i}   i^n &=& (-1)^nn!.
\end{eqnarray}
By using such properties, the expression shown in \eqref{approxi worst channel} can be simplified significantly. Specifically  all the factors of the order of $\left(\frac{ib}{P}\right)^{l} $, $l<n$, will be completely removed,  and we can write
  \begin{eqnarray}\nonumber
\mathcal{P}_{best,II}  &\approx&  \sum^{M}_{n=1} \frac{1}{2^{n-1}(n-1)!}\sum^{n}_{i=0}{n \choose i} (-1)^i  \left( \left(\frac{ib}{P}\right)^n \right. \\ && \nonumber \times \left.2^{2n} q(\ln x) \right)  \frac{M!}{n!(M-n)!} (1-n\epsilon) \epsilon^{M-n} +\epsilon^M\\ \nonumber &\approx&  -c \epsilon^M\ln \epsilon +\epsilon^M,
\end{eqnarray}
and the proposition is proved.
\end{proof}
An interesting observation here is that the high SNR approximation of the outage probability achieved by the equal power allocation scheme also includes a term $\ln \epsilon$, similar to the individual transmission scheme. Compared to traditional relaying networks, such a phenomenon is unique, which is due to the fact that, in energy harvesting cases,  the relaying transmission power is constrained by the source-relay channel attenuation during the first phase transmissions.

\subsection{Worst outage performance}\label{subsection worst}
The worst  outage performance achieved by the non-cooperative individual strategy will be
\begin{eqnarray}
\mathcal{P}_{worst, I}
&\approx&  1 -  (1-M\epsilon) \left( 1+  \frac{2\epsilon}{\eta} \ln  \sqrt{\frac{\epsilon}{\eta  }} \right)^M \\ \nonumber &\approx& M\epsilon \left(1- \frac{2}{\eta} \ln  \sqrt{\frac{\epsilon}{\eta  }}\right),
\end{eqnarray}
which   still decays  as $\frac{\log SNR}{SNR}$.
 And the worst outage performance achieved by the equal power allocation can be approximated as
    \begin{eqnarray}
\mathcal{P}_{worst,II} &\approx&\nonumber
\frac{M}{M-1} \frac{Mb}{P} e^{-M\epsilon} +1-e^{-M\epsilon}\\ \nonumber & \approx& \epsilon M\left(1+\frac{M}{\eta(M-1)}\right),
\end{eqnarray}
which   decays as $\frac{1}{SNR}$, and hence realizes less error than the individual transmission scheme.   According to Lemma \ref{lemma max min}, the water filling strategy can achieve the optimal performance for the user with the worst channel conditions. Upper and lower bounds have been developed in Lemma \ref{lemma worst}.   In the following we   show  that  such bounds converge at high SNR. We first focus on the upper bound which can be rewritten as
\begin{align}\nonumber
\mathcal{P}_{worst, III}&<1- e^{-M\epsilon}  \left(     \left(1- \frac{1}{(M-1)}  \frac{\epsilon M^2}{\eta}\right) \right. \\\nonumber   & +\left.M \left(\frac{\epsilon}{\eta}\right)\int^{M-1}_{c}  \left(\frac{1}{M-1}\right.\right. \\ \nonumber &  \left.\left. - \frac{a(y)\epsilon}{\eta} \frac{1}{(M-1)(M-2)}\right)dy\right)\\\nonumber
&\approx 1- e^{-M\epsilon}  \left(     \left(1- \frac{1}{(M-1)}  \frac{\epsilon M^2}{\eta}\right) \right. \\\nonumber   & +\left.M \left(\frac{\epsilon}{\eta}\right)\int^{M-1}_{c}   \frac{1}{M-1}  dy\right)
\\\nonumber
&\approx \epsilon \left(M+\frac{M(M-1-c)}{(M-1)\eta}+\frac{M^2}{(M-1)\eta}\right),
\end{align}
for $M>2$, where the first inequality follows from \eqref{approximation} and $c$ is a constant to ensure $a(y)\epsilon\rightarrow 0$ for $y\in[c, M-1]$.
As can be observed from the above equation, the upper bound  decays as $\frac{1}{SNR}$. The same conclusion can be obtained for the case of $M=2$ by applying the approximation in \eqref{approximation2}.  On the other hand, the lower bound of the worst case probability can be similarly approximated as follows:
\begin{eqnarray}\nonumber
\mathcal{P}_{worst, III}> \epsilon M\left(1+\frac{1}{\eta(M-1)} \right).
\end{eqnarray}
Since both upper and lower bounds decay   as $\frac{1}{SNR}$, we can conclude that the worst case outage probability achieved by the water filling scheme   also decays  as $\frac{1}{SNR}$, which also shows the   rate of decay of the averaged outage probability for the water filling scheme.

\section{Auction based Power Allocation}\label{section auction}
In the previous sections, three different strategies    to use  the harvested energy have been studied, where the water filling strategy can achieve the best performance in several criteria.  However, such a centralized method requires that the relay has   access to   global CSI. For a system with a large number of users,  the provision of global CSI  consumes significant system overhead, which motivates the study of the following auction based    strategy to realize distributed power allocation.

\subsection{Power auction game}
The addressed power allocation problem can be modeled as a game in which  the multiple destinations  compete with each other for the assistance of the relay. Note that we will  need to  consider only the destinations whose corresponding source messages can be reliably decoded at the relay.  Specifically each destination   submits a bid  to the relay, and the relay will update the power allocation of the users  at the end of each iteration. Each destination  knows only its own channel information, and the relay has no access to   relay-destination channel information. The described game can be formulated as   follows:
\begin{itemize}
\item Bids: Each user submits a scalar $b_i$ to the relay;
\item Allocation: The relay will allocate the following transmission power to each user:
\begin{equation}
P_{ri} = \frac{b_i}{\sum^{M}_{j=1}b_j+\xi} P_r = \frac{b_i}{\sum^{M}_{j=1}b_j+\xi} \sum^{M}_{i=1}  \eta P_s |h_{i}|^2\theta_i,
\end{equation}
where $\xi$ is a factor related to the power reserved at the relay.
\item Payments: Upon the allocated transmission power $P_{ri}$,   user $i$ pays the relay $C_i=\pi P_{ri}$.
\end{itemize}
Recall that the data rate the $i$-th  destination can achieve is $
R_{d,i}= \frac{1}{2}\log(1+ {P_{ri}} |g_i|^2)$, and therefore it is natural  to consider a game in which  the $i$-th user selects $b_i$ to maximize its payoff as follows:
\begin{eqnarray}
U_i(b_i; \mathbf{b}_{-i},\pi)   &=&
\frac{1}{2}\log(1+  P_{ri} |g_i|^2) - \pi P_{ri},
\end{eqnarray}
where  $\mathbf{b}_{-i}=(b_1, \cdots, b_{i-1}, b_{i+1}, b_N)$. The addressed game in strategic form is a triplet $G=\left(\mathcal{N}, (b_i)_{i\in \mathcal{N}}, (U_i)_{i\in \mathcal{N}} \right)$, where $\mathcal{N}=\{1, \ldots, N\}$ includes all the destinations whose  source messages can be delivered to the relay successfully, and $b_i\in \mathbb{R}^+$, where  $\mathbb{R}^+$ denotes the nonnegative real numbers. A desirable outcome of such a game is the following:
\begin{Definition}
The Nash Equilibrium (NE) of the addressed game, $G$, is a bidding profile $\mathbf{b}^*$ which ensures that no user wants to deviate unilaterally, i.e.,
\begin{equation}
U_i(b_i^*; \mathbf{b}^*_{-i}, \pi) \geq  U_i(b_i; \mathbf{b}^*_{-i}, \pi), \forall i \in \mathcal{N}, \forall b_i \geq 0.
\end{equation}
\end{Definition}
The following proposition provide the best response functions and the uniqueness of the   NE for the addressed game.

\begin{proposition}\label{propoistion}
For the addressed power auction game,   there exists a threshold price such that a unique NE exists when the price $\pi$ is larger than such a threshold, otherwise there are infinitely many equilibria. In addition, the unique best response function for each player $i$ can be expressed as in \eqref{best response function},
\begin{figure*}
\begin{eqnarray}\label{best response function}
BR_i(\mathbf{b}_{-i},\pi) =\left\{ \begin{array}{ll} \infty, &  \pi< \min \{\pi_{u,i}, i\in \mathcal{N}\} \\
0, & \pi> \max\left\{ \frac{|g_i|^2}{2\ln 2},  i\in \mathcal{N}\right\}  \\\frac{\left( \frac{1}{2\ln 2 \pi} - \frac{1}{|g_i|^2}\right)}{P_r - \frac{1}{2\ln 2 \pi} - \frac{1}{|g_i|^2} } \left(\sum^{M}_{j=1,j\neq i}b_j +\xi\right),& otherwise \\
\end{array} \right.,
\end{eqnarray}
\end{figure*}
where $\pi_{u}=\frac{|g_i|^2}{2\ln 2(1+P_{r}|g_i|^2)} $.
\end{proposition}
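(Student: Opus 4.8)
The plan is to characterize each player's best response by a change of variables, and then to locate the fixed points of the resulting best-response map.

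\emph{Deriving the best response.} Fix a player $i$ together with the opponents' bids $\mathbf{b}_{-i}$, and write $B_{-i}\triangleq\sum_{j\neq i}b_j+\xi>0$. Since $b_i\mapsto P_{ri}=\frac{b_iP_r}{b_i+B_{-i}}$ is a strictly increasing bijection from $[0,\infty)$ onto $[0,P_r)$, maximizing $U_i(b_i;\mathbf{b}_{-i},\pi)$ over $b_i\geq 0$ is equivalent to maximizing $V(p)\triangleq\frac{1}{2}\log(1+p|g_i|^2)-\pi p$ over $p\in[0,P_r)$. The map $V$ is strictly concave, so $V'$ is strictly decreasing with its unique zero at the water-filling level $p^{\star}_i=\frac{1}{2\pi\ln 2}-\frac{1}{|g_i|^2}$; projecting this point onto $[0,P_r)$ produces three cases. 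If $p^{\star}_i\leq 0$, equivalently $\pi\geq\frac{|g_i|^2}{2\ln 2}$, then $V$ is decreasing on $[0,P_r)$ and the best response is $b_i=0$. If $p^{\star}_i\geq P_r$, equivalently $\pi\leq\pi_{u,i}=\frac{|g_i|^2}{2\ln 2(1+P_r|g_i|^2)}$, then $V$ is increasing on $[0,P_r)$, its supremum is approached only as $p\to P_r^{-}$, no finite maximizer exists, and the best response is $b_i=\infty$. Otherwise $0<p^{\star}_i<P_r$, the optimum is the interior point $p^{\star}_i$, and solving $\frac{b_iP_r}{b_i+B_{-i}}=p^{\star}_i$ for $b_i$ gives the interior branch in \eqref{best response function}. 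Collecting these per-player conditions over all players yields the three branches displayed there.

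\emph{Existence, uniqueness, and the threshold.} The key remark is that whenever player $i$'s best response is the interior one, the power actually allocated to $i$ at any equilibrium equals $p^{\star}_i$, a quantity that does not depend on the bids. Hence at a Nash equilibrium the allocations are forced: $P_{ri}=p^{\star}_i$ for the active players $\mathcal{A}=\{i:\pi<\frac{|g_i|^2}{2\ln 2}\}$ and $P_{ri}=0$ for the others. Writing $P_{ri}=\frac{b_iP_r}{T}$ with $T\triangleq\sum_jb_j+\xi$ and summing over $\mathcal{A}$ gives $T=\frac{\xi P_r}{P_r-\sum_{i\in\mathcal{A}}p^{\star}_i}$. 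Because $\sum_{i\in\mathcal{A}}p^{\star}_i$ is non-increasing in $\pi$ and tends to $0$, there is a threshold price $\pi^{\star}$ above which $\sum_{i\in\mathcal{A}}p^{\star}_i<P_r$; for $\pi>\pi^{\star}$ the quantity $T$, hence each bid $b_i=\frac{p^{\star}_i}{P_r}T$, is uniquely determined, and one checks directly that this profile is an equilibrium, since every active player attains the unconstrained maximum of $V$ and every inactive player its boundary optimum. For $\pi\leq\pi^{\star}$ the water-filling levels cannot be met simultaneously and at least one best response is unbounded, and one describes the resulting set of equilibria as a continuum obtained from bid vectors growing without bound along distinct rays, which accounts for the infinitely many equilibria.

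\emph{Main obstacle.} The difficult part is the equilibrium analysis of the previous paragraph: the strategy set $\mathbb{R}^{+}$ is not compact and $U_i$ is only quasi-concave (single-peaked) in $b_i$ rather than concave, so the classical existence results for concave games do not apply and the argument must be made by hand. In particular one has to handle carefully the regime in which the best response is $\infty$, where no maximizer exists, and make precise how the competition for the finite relay budget $P_r$ produces a continuum of equilibria there, as well as pin down the exact value of $\pi^{\star}$ in terms of the channel gains $\{|g_i|^2\}_{i\in\mathcal{N}}$.
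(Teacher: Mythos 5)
Your derivation of the best response is essentially the paper's, just packaged more cleanly: the paper differentiates $U_i$ with respect to $b_i$ directly and reads off the target power $P_{ri}=\bigl[\tfrac{1}{2\ln 2\,\pi}-\tfrac{1}{|g_i|^2}\bigr]^+$ from quasi-concavity, whereas you change variables to $p=P_{ri}$ and maximize the concave $V(p)$ on $[0,P_r)$; the two are equivalent, and your version makes the three branches (and the role of $\pi_{u,i}$ as the point where $p_i^{\star}=P_r$) transparent. Incidentally, your interior branch comes out with denominator $P_r-\tfrac{1}{2\ln 2\,\pi}+\tfrac{1}{|g_i|^2}$, i.e.\ $P_r-p_i^{\star}$, which is what solving $\tfrac{b_iP_r}{b_i+B_{-i}}=p_i^{\star}$ actually gives; the sign in the paper's displayed denominator appears to be a typo. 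Where you genuinely diverge is uniqueness. The paper proves it by showing the best-response map is a contraction in the $\ell_2$ norm once $\pi$ exceeds a threshold (via Minkowski and Cauchy bounds on $\|BR(\mathbf{x})-BR(\mathbf{y})\|_2$), which buys not only uniqueness but also convergence of the iterative bidding dynamics — a fact the paper relies on later for the distributed implementation. You instead observe that any interior equilibrium must allocate exactly $p_i^{\star}$ to each active player, sum the allocation identity to pin down $T=\sum_j b_j+\xi$ in closed form, and conclude the bid vector is forced; this is shorter, yields an explicit equilibrium and an explicit, arguably sharper threshold condition $\sum_{i\in\mathcal{A}}p_i^{\star}<P_r$, but says nothing about whether best-response iteration converges to it. One caveat: your treatment of the sub-threshold regime (the claimed continuum of equilibria from bid vectors "growing without bound along distinct rays") is only sketched and does not actually exhibit infinitely many Nash equilibria in the stated strategy space $\mathbb{R}^+$, where no maximizer exists when the best response is $\infty$; to be fair, the paper's own proof does not establish that clause either.
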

\begin{proof}
See the appendix.
\end{proof}
Some explanations  about the choice of the best response function shown in Proposition \ref{propoistion} follows. When the price is too large, a player's payoff function is always negative, and therefore it simply quits the game, i.e.  $b_i=0$. Another extreme case is that the price is too small, which motivates a player to compete aggressively with other players by using a large bid. Unlike \cite{Huangjianwei06} and \cite{Huangj08}, the uniqueness of the NE is shown by using the contraction mapping property of the best response functions, which can simplify the following discussion about practical  implementation.

The addressed power auction game can be implemented in an iterative way. The relay will first announce the price to all players. During each iteration, each user will update its bid according to the following:
\begin{eqnarray}\label{iter}
BR_i(\mathbf{b}_{-i}(n-1))  &=&
\frac{\left( \frac{1}{2\ln 2 \pi} - \frac{1}{|g_i|^2}\right)}{P_r - \frac{1}{2\ln 2 \pi} - \frac{1}{|g_i|^2} }\\ \nonumber &&\times \left(\sum^{M}_{j=1,j\neq i}b_j^{n-1} +\xi\right),
\end{eqnarray}
where $BR_i(\mathbf{b}_{-i}(n-1)) $ quantifies  the best response dynamics determined by the  actions from the previous iteration. For simplicity,  we  consider only the case of  $  \max\left\{ \frac{|g_i|^2}{2\ln 2},i\in \mathcal{N}\right\}\leq \pi\leq \min \{\pi_{u,i}, i\in \mathcal{N}\}$. As shown in the proof for Proposition \ref{propoistion}, the best response function for the addressed power auction game is a contraction mapping, provided that the price is larger than the threshold, which means that this iterative algorithm   converges to a unique fixed point, namely  the NE of the addressed game. Note that such a convergence property is proved without the need for the nonnegative matrix theory as in \cite{Huangjianwei06} and \cite{Huangj08}.

In practice, the implementation of the iterative steps in \eqref{iter} requires a challenging assumption that each user knows the other players' actions, and such an assumption can be avoided by using the following equivalent updating function,
$
BR_i(\mathbf{b}_{-i}(n-1)) =
\frac{\left( \frac{1}{2\ln 2 \pi} - \frac{1}{|g_i|^2}\right)}{P_r - \frac{1}{2\ln 2 \pi} - \frac{1}{|g_i|^2} }\frac{(P_r-P_{ri}^{n-1}) b_j^{n-1}}{P_{ri}^{n-1}}$, where we only consider the case  $ \left( \frac{|g_i|^2}{2\ln 2(1+P_{ri}|g_i|^2)} -\pi _i\right)>0 $.
As a result,  each user can update its bid only according to its local information, such as its previous allocated power and previous bid, without needing to know the actions of other users.
\section{Numerical results}\label{section numberical results}
In this section, computer simulations will be carried out to evaluate the performance of those energy harvesting relaying protocols described in the previous sections.

 We first study the accuracy of the developed analytical results. Specifically in Fig. \ref{fig compare},  the outage performance achieved by the individual transmission scheme and the equal power allocation scheme is shown as a function of SNR. All the channel coefficients are assumed to be complex Gaussian with zero means and unit variances. The targeted data rate is $R=2$ bits per channel use (BPCU), and the energy harvesting efficiency is set as   $\eta=1$. As can be seen from the figures   the  developed analytical results   exactly match   the simulation results, which demonstrates the accuracy of the developed analytical results.

\begin{figure}[!htp]
\begin{center} \subfigure[ Individual Transmission]{\includegraphics[width=0.48\textwidth]{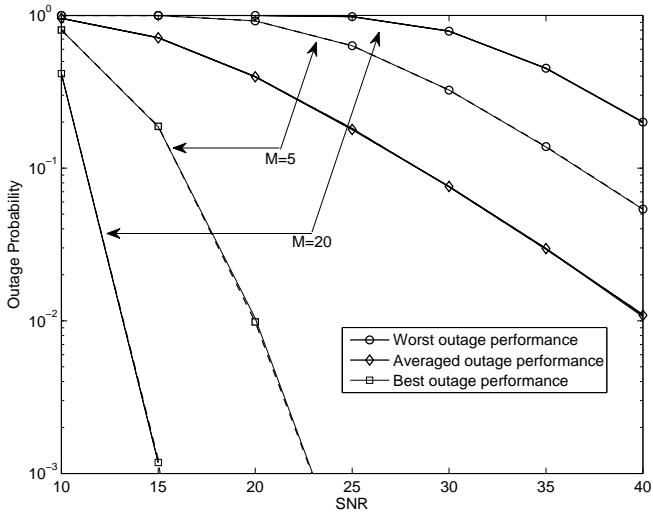}}
\subfigure[Equal Power Allocation  ]{\label{fig set comparison
b2}\includegraphics[width=0.48\textwidth]{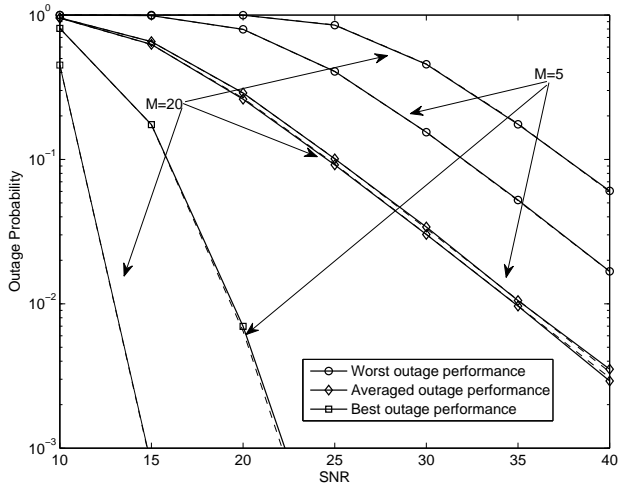}}
\end{center}
 \caption{ Outage probabilities achieved by the individual transmission scheme and the equal power allocation scheme. $R=2$ BPCU. The solid curves are for the simulation results and the dashed ones are for the analytical results. }\label{fig compare}
\end{figure}

Comparing the two cases  in Fig. \ref{fig compare}, we   find that the use of the equal power allocation strategy   improves the outage performance. Consider  the averaged outage performance as an example. When the SNR is $40$ dB, the use of the individual transmission scheme  realizes   outage probability of $1\times 10^{-2}$, whereas the equal power allocation scheme can reduce  the  outage probability to $3\times 10^{-3}$. Such a phenomenon confirms the asymptotic results shown in Section IV.A. Specifically the outage probability achieved by the individual transmission scheme decays with the SNR at a rate $\frac{1}{SNR}(1+2\ln SNR)$, but the equal power allocation scheme can achieve a faster   rate of decay, $\frac{1}{SNR}$.

When more source-destination pairs join in the transmission, it is more likely to have some nodes with extreme channel conditions, which is the reason to observe the phenomenon in Fig. \ref{fig compare} that with   a larger number of user pairs, the best outage performance improves but the worst outage performance degrades.  The impact of the number of user pairs on the average outage performance can also be observed in the figure. For the individual transmission scheme, there is no cooperation among users, so the number of user pairs has no impact on the average outage performance. On the other hand, it is surprising to find that an increase in the   number of users  yields  only a slight improvement in  the performance  of the equal power allocation scheme, which might be due to the fact that power allocation can improve the transmission from the relay to the destinations, but not   the source-relay transmissions.

\begin{figure}[!htp]
\begin{center} \subfigure[ Bounds for worst outage performance]{\includegraphics[width=0.48\textwidth]{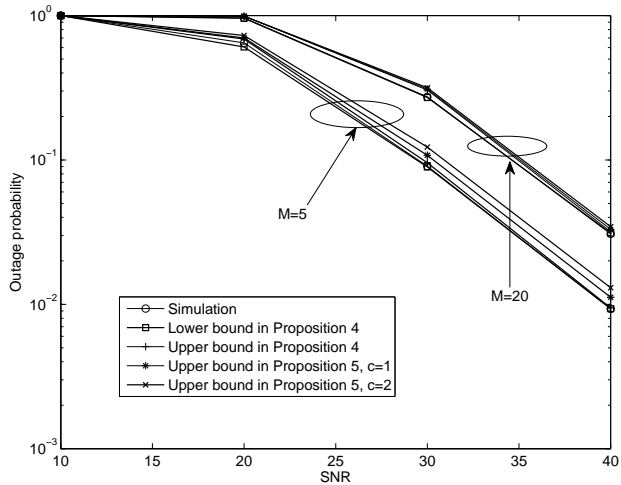}}
\subfigure[Outage performance   ]{\label{fig set comparison
b2}\includegraphics[width=0.48\textwidth]{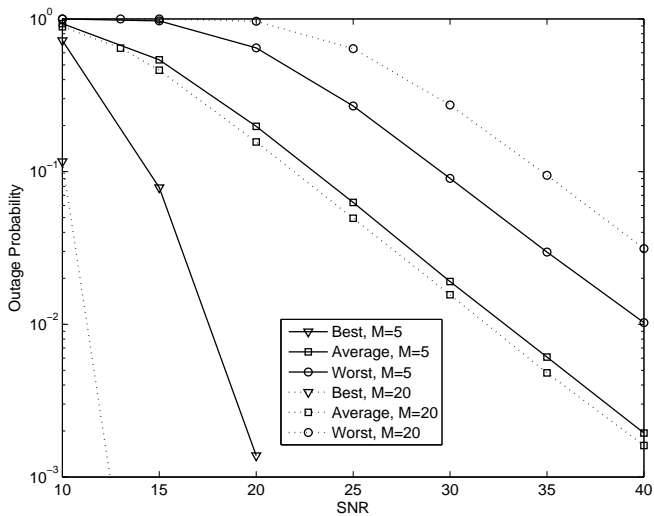}}
\end{center}\vspace{-1em}
 \caption{Outage probability achieved by the water filling  transmission scheme. $R=2$ BPCU.  }\label{fig water}\vspace{-2em}
\end{figure}

In Fig. \ref{fig water}, the performance of the water filling scheme is studied. The same simulation setup as in the previous figures is used.  Firstly the  upper and lower bounds developed in Propositions  \ref{lemma worst} and \ref{lemma worst2} are compared to the simulation results in Fig. \ref{fig water}.a. As can be seen from the figure, the lower bound developed in \eqref{lowerbound1} and the upper bound in \eqref{upperbound 1} are very tight. Recall that the reason for the bounds in \eqref{upperbound 1} and \eqref{lowerbound1} are tight is because the dominant factor in the summation $\sum^{M}_{i=1}z_i$ is $z_{(M)}$. As shown at the end of Section \ref{section centralized},   $z_{(M)}$ is  an unbounded variable, whereas the variance of the   other variables,  i.e. $z_{(m)}$, $m<M$, are always bounded.   In Fig. \ref{fig water}.b, the outage performance based on different criteria is  shown for the water filling scheme. Comparing Figures  \ref{fig compare} and \ref{fig water}, we can see that the use of the water filling scheme   yields the best performance.

\begin{figure}[!htbp]\centering
    \epsfig{file=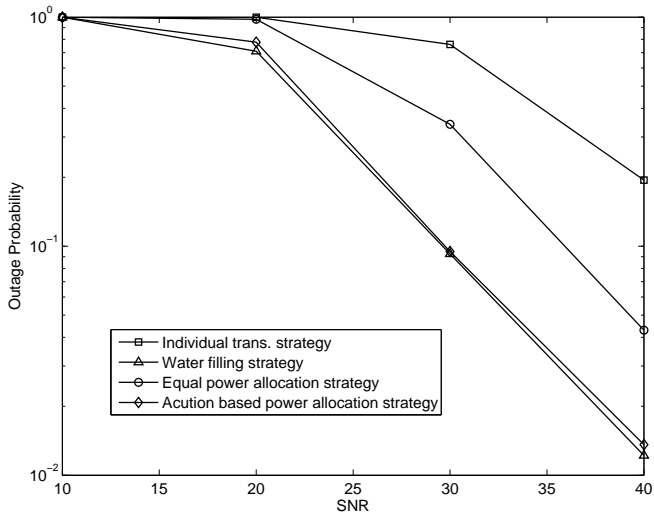, width=0.48\textwidth, clip=}
\caption{Outage probability for the user with the worst channel conditions. $R=0.5$ BPCU.}\label{fig_worst}\vspace{-1em}
\end{figure}

In Figures \ref{fig_worst}, \ref{fig_averged} and \ref{fig number}, we focus on the comparison among the different power allocation strategies described in this paper. The targeted data rate is $0.5$ BPCU, and there are $20$ user pairs, i.e.,  $M=20$. Channels are  assumed to be Rayleigh fading with path loss attenuation. Particularly it is assumed that the distance from the sources to the relay is $2$m,  the same as the distance from the relay to the destinations.   In Fig. \ref{fig_worst}, we study the outage performance for the user with the worst channel conditions. The water filling scheme can outperform the individual transmission and equal power strategies, consistent with the observations from the previous figures. The auction based strategy can achieve  performance   close to that of the water filling scheme. As indicated in Lemma \ref{lemma max min}, the water filling scheme is optimal in terms of the outage performance for the user with the worst channel conditions, which implies that  the auction based scheme can achieve  performance close to the optimal.
\begin{figure}[!htbp]\centering
    \epsfig{file=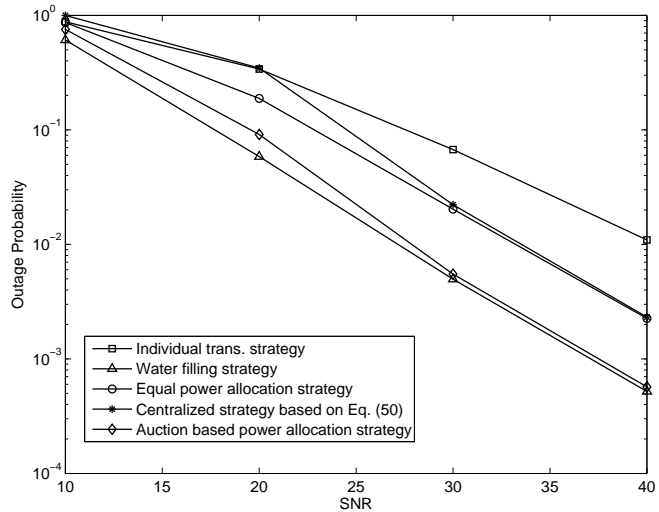, width=0.48\textwidth, clip=}\vspace{-1em}
\caption{Average outage performance achieved by the studied transmission protocols.   $R=0.5$ BPCU.  }\label{fig_averged}\vspace{-1em}
\end{figure}

In Fig. \ref{fig_averged} the averaged outage performance achieved by the addressed transmission schemes is shown. Similar to Fig. \ref{fig_worst}, the water filling scheme can achieve the best performance, and the auction based scheme has  performance close to the water filling scheme. In Fig. \ref{fig_averged}, we also show the performance of the centralized strategy based on \eqref{maximi worst}. Such a scheme can achieve the optimal performance for the user with the worst channel conditions, which means that  its performance will be the same as the water filling scheme shown in Fig. \ref{fig_worst}. However, such a scheme suffers some performance loss when the criterion is changed to the averaged outage performance. Finally, the performance of the transmission schemes are compared in Fig. \ref{fig number} in terms of the number of successful receivers. In this context, the auction based scheme can achieve   better performance than the equal power one, by realizing two more successful receivers for the SNR range from $10$dB to $25$dB. An interesting observation is that the strategy maximizing the worst outage performance realizes fewer  successful receivers than the individual scheme, which is due to the fact that such a strategy will put the user with the worst channel conditions as the top priority. And allocating more power to such users with poor channel conditions will reduce the performance of   other users.   Consistently with  the previous figures,  the water filling scheme can achieve the best performance, and ensure  the most successful receivers. However, it is worth   recalling  that the water filling scheme requires   global CSI at the relay, whereas the other schemes, such as the auction based and equal power strategies, can be realized in a distributed way.

\begin{figure}[!htbp]\centering
    \epsfig{file=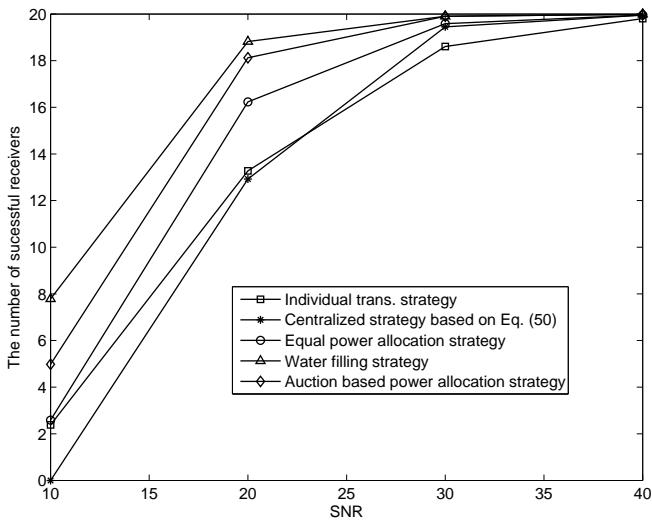, width=0.48\textwidth, clip=}\vspace{-1em}
\caption{ The number of successful destinations as a function of SNR   $R=0.5$ BPCU}\label{fig number}\vspace{-2em}
\end{figure}
\section{Conclusion} \label{section conclusion}
 In this paper, we have considered several power allocation strategies for a cooperative network in which multiple source-destination pairs communicate with each other via an energy harvesting relay.  The non-cooperative individual transmission scheme results in a  outage performance decaying as $\frac{\log SNR}{SNR}$,  the centralized power allocation strategies   ensure that the outage probability decays at a faster rate $\frac{1}{SNR}$, and the water filling scheme can achieve   optimal performance in terms of a few criteria.   An auction based power allocation scheme has  also been   proposed to achieve a better tradeoff between the system performance and complexity.

\section{Acknowledgments }
The authors thank Dr Zhiyong Chen for helpful discussions.

 \appendix
 \textsl{Proof of Theorem \ref{theorem1}:}
According to the instantaneous realization of the channels, we can group destinations into two sets, denoted by $\mathcal{S}_1$ and $\mathcal{S}_2$. $\mathcal{S}_1$ includes the destinations whose corresponding sources cannot deliver their information reliably to the relay, and   $\mathcal{S}_2$ includes the remaining  destinations; thus the size of $\mathcal{S}_2$ is $N$, i.e. $|\mathcal{S}_2|=N$. Therefore the outage probability for the $i$-th   destination is
  \begin{eqnarray}\nonumber
\mathcal{P}_{i,II} =\sum^{M}_{n=1} \mathrm{Pr}\left( \frac{1}{2}\log\left(1+\frac{P_r}{n}|g_{i}|^2\right)<R, N=n, i\in\mathcal{S}_2 \right)\\\nonumber+\mathrm{Pr}(i\in\mathcal{S}_1 ).
\end{eqnarray}
The second probability on  the righthand side of the above equation can be  calculated as $(1-e^{-\epsilon})$ by analyzing the error event $|h_i|^2<\epsilon$. The probability of the event $i\in \mathcal{S}_2$ is $\frac{n}{M}$, conditioned on the size of the subset   $N$, so the first factor in the above equation can be rewritten as
  \begin{eqnarray}\label{eq 22}
\mathcal{P}_{i,II} =
\sum^{M}_{n=1}\frac{n}{M} \mathrm{Pr}\left( \left.\frac{1}{2}\log\left(1+\frac{P_r}{N}|g_{i}|^2\right)<R\right|N=n \right)\\ \nonumber \times\mathrm{Pr}(N=n).
\end{eqnarray}
The total available energy given $N$, the size of $\mathcal{S}_2$,  is
$$P_r = \sum^{n}_{i=1}\eta P_s|h_{i}|^2\theta_i =\sum^{n}_{i=1}\eta \left(P_s|h_{i}|^2-   2^{2R}+1 \right).$$ Define $Q_1\triangleq  \mathrm{Pr}\left( \left.\frac{1}{2}\log\left(1+\frac{P_r}{N}|g_{i}|^2\right)<R\right|N=n \right)$ which can be written as
 \begin{eqnarray}
 Q_1 =  \mathrm{Pr}\left(  \left.\left( P_s\sum^{n}_{i=1}| {h}_{i}|^2-na\right) |g_{i}|^2 <\frac{na}{\eta} \right| |{h}_{i}|^2>\epsilon,\right. \\ \nonumber \left. 1\leq i\leq n, |{h}_{i}|^2<\epsilon, n+1\leq i\leq M \right).
\end{eqnarray}
 Using   the independence among the channels, $Q_1$ can be evaluated as
  \begin{eqnarray}
 Q_1 =  \mathrm{Pr}\left(  \left.\left( P_s\sum^{n}_{i=1}| {h}_{n_i}|^2-na\right) |g_{i}|^2 <\frac{na}{\eta} \right| |{h}_{n_i}|^2>\epsilon,\right.\\ \nonumber \left. 1\leq i\leq n \right).
\end{eqnarray}
Define $Y=\sum^{n}_{i=1}| {h}_{n_i}|^2$. To evaluate the above probability, it is important to find the density function of the sum of $n$ exponentially distributed variables, $Y$, with the condition  that each variable is larger than $\epsilon$.   Conditioned on $|{h}_{n_i}|^2>\epsilon$, we can find the Laplace transform of the density function of $|{h}_{n_i}|^2$ as
\begin{eqnarray}\nonumber
\mathcal{L}_{|{h}_{n_i}|^2>\epsilon}(s) &=& \frac{1}{1-F_{|h_i|^2}(\epsilon)}\int^{\infty}_{\epsilon} e^{-xs} f_{|h_i|^2}(x)dx\\ \nonumber &=& e^{\epsilon} \int^{\infty}_{\epsilon} e^{-xs} e^{-x}dx = \frac{1}{1+s}e^{-s\epsilon}.
\end{eqnarray}
Given the independence among the channels, conditioned on $ |{h}_{n_i}|^2>\epsilon, 1\leq i\leq n$, the density function of the sum of these channel coefficients has the following Laplace transform:
\begin{eqnarray}
\mathcal{L}_{\sum^{N}_{i=1}| {h}_{n_i}|^2}(s) = \left(\mathcal{L}_{|{h}_{n_i}|^2>\epsilon}(s) \right)^n = \frac{e^{-n\epsilon s}}{(1+s)^n}.
\end{eqnarray}
By inverting  Laplace transform, the pdf  of the sum, conditioned on $ |{h}_{n_i}|^2>\epsilon, 1\leq i\leq n$, is obtained as
\begin{eqnarray}\label{pdf}
f_{\sum^{N}_{i=1}| {h}_{n_i}|^2}(y) = \frac{(y-n\epsilon)^{n-1}}{(n-1)!}  e^{-(y-n\epsilon)}u(y-n\epsilon).
\end{eqnarray}
A special case   is when $\epsilon=0$, in which case  the above expression reduces to the   classical chi-square distribution.
Now the addressed probability can be calculated as
 \begin{eqnarray}\nonumber
 Q_1 &=&  \int_{n\epsilon } ^{\infty} \left(1- e^{-\frac{b_n}{Py-na}}\right) \frac{1}{(n-1)!} (y-n\epsilon)^{n-1} e^{-(y-\epsilon)}dy
 \\   \nonumber
 &=& \frac{1}{(n-1)!}\left((n-1)! -  2\left(\frac{b_n}{P}\right)^{\frac{n}{2}}\mathbf{K}_{n}\left(2\sqrt{\frac{b_n}{P}}\right) \right).
\end{eqnarray}
So the overall outage probability can be obtained after  some algebraic manipulations by using  the following result:
   \begin{eqnarray}\label{eq32}
\mathrm{Pr}(N=n)  &=& \frac{M!}{n!(M-n)!} e^{-n\epsilon}\left(1-e^{-\epsilon}\right)^{M-n}.
\end{eqnarray}
And the theorem is proved.
 \hspace{\fill}$\blacksquare$\newline

\textsl{Proof of Lemma \ref{lemma max min} :}
The lemma can be proved by first developing a power allocation strategy optimal to the worst user outage performance and then showing that such a scheme achieves the same worst user outage probability  as the water filling strategy.

Suppose that  there are  $n$ sources that  can deliver their signals   successfully  to the relay.  The power allocation  problem, which is to optimize the worst user outage performance,  can be formulated as follows:
\begin{eqnarray}\label{maximi worst}
\underset{P_{ri}}{\max} & \min \{R_{d,1}, \cdots, R_{d,n}\}\\ \nonumber s.t. & \sum_{i=1}^nP_{ri} = P_r.
\end{eqnarray}
In order to find a closed-form expression for its solution, this  optimization problem can be converted into the following equivalent form by introducing an auxiliary parameter:
\begin{eqnarray}
\underset{P_{ri}}{\max}  & t\\ \nonumber s.t. &  \frac{1}{2} \log(1+P_{ri}|g_{i}|^2)>t \\ \nonumber  & \sum_{i=1}^nP_{ri} = P_r.
\end{eqnarray}
By applying the Karush-Kuhn-Tucker conditions \cite{Boyd}, a closed form expression for  the optimal  solution can be obtained as
\begin{eqnarray}
P_{ri} = \frac{2^{2t}-1}{|g_{i}|^2}.
\end{eqnarray}
And the parameter $t$ can be found by solving the following equation based on  the total power constraint:
\begin{eqnarray}
\sum^{n}_{i=1}\frac{2^{2t}-1}{|g_{i}|^2} = \sum^{n}_{i=1}  \eta P_s |h_{i}|^2\theta_i,
\end{eqnarray}
which yields
\begin{eqnarray}
\label{eq 5}
t&=&\frac{1}{2} \log\left(1+\frac{\sum^{n}_{i=1}  \eta P_s |h_{i}|^2\theta_i}{\sum^{n}_{i=1}\frac{1}{|g_{i}|^2} }\right)\\\nonumber
&=&\frac{1}{2} \log\left(1+\frac{\sum^{n}_{i=1}  \eta (P_s |h_{i}|^2-   2^{2R}+1)}{\sum^{n}_{i=1}\frac{1}{|g_{i}|^2}} \right).
\end{eqnarray}
By using this closed form solution,  the worst user outage probability  can be written as in \eqref{eq7}.
\begin{figure*}
\begin{eqnarray}
\label{eq7}
\mathcal{P}_{out} & =&    \mathrm{Pr}\left(\min\left\{\frac{1}{2}\log(1+P_s|h_{i}|^2), i\in\{1, \cdots, M\}\right\}>R  ,N=M\right) + \sum^{M-1}_{n=0}\mathrm{Pr}(N=n)
\\ \nonumber &=&    \mathrm{Pr}\left( \frac{\sum^{M}_{i=1}  \eta (P_s |h_{i}|^2-   2^{2R}+1)}{\sum^{M}_{i=1}\frac{1}{|g_{i}|^2}} <2^{2R}-1, N=M \right) + \sum^{M-1}_{n=0}\mathrm{Pr}(N=n).
\end{eqnarray}
\end{figure*}
On the other hand, for the addressed water filling strategy, the outage event for the user with the worst performance rises either because  at least one of the source messages cannot be detected at the relay, $N<M$, or  there is not enough power for all users, which means that the outage probability will be
\begin{eqnarray}
\label{eq8}
\mathcal{P}_{worst,III} =   \mathrm{Pr}\left( \sum^{M}_{i=1}\eta \left(P_s|h_{i}|^2-  a \right) <  \sum^{M}_{i=1} \frac{2^{2R}-1}{|g_i|^2}\right., \\ \nonumber \left.  N=M\right) + \sum^{M-1}_{n=0}\mathrm{Pr}(N=n).
\end{eqnarray}
Comparing \eqref{eq7} and \eqref{eq8}, we find that two strategies achieve the same worst outage performance, and the lemma is proved.  \hspace{\fill}$\blacksquare$\newline

\textsl{Proof of  Proposition \ref{lemma worst} :}
The expression for   the outage probability of  the user with the worst channel conditions achieved by the water filling strategy is given  in \eqref{eq8}. The first factor in the expression, denoted by  $Q_4$,  can be expressed as
\begin{eqnarray}
\label{eq02}
Q_4 =  \mathrm{Pr}\left( \left. \sum^{M}_{i=1} z_i >\frac{1}{a}\sum^{M}_{i=1}\eta \left(P_s|h_{i}|^2-  a \right)\right| N=M\right) \\ \nonumber \times \mathrm{Pr}(N=M).
\end{eqnarray}
To obtain some insightful understandings for the water filling  scheme, we consider the following bounds:
\begin{eqnarray}\label{bounds used in worst}
\mathcal{P}_w\left(   z_{(M)} >w \right) < \mathcal{P}_w\left(  \sum^{M}_{i=1} z_i >w \right) \leq \mathcal{P}_w\left(   z_{(M)} \right. \\ \nonumber \left.+(M-1)z_{(M-1)}>w \right),
\end{eqnarray}
where $w=\frac{1}{a}\sum^{M}_{i=1}\eta \left(P_s|h_{i}|^2-  a \right)$, $\mathcal{P}_w(\cdot)$ denotes the probability conditioned on a fixed $w$, and the condition $N=M$ has been omitted to simplify  notation.
 The upper bound can be written as
\begin{eqnarray}\nonumber
\mathcal{P}_w\left(   z_{(M)} +(M-1)z_{(M)}>w \right) = \mathcal{P}_w\left(   z_{(M)}  >w \right) \\ \nonumber +\mathcal{P}_w\left(   z_{(M)} +(M-1)z_{(M-1)}>w, \frac{w}{M}<z_{(M)} <w \right),
\end{eqnarray}
where the condition $z_{(M)}>\frac{w}{M}$ is due to the fact that $z_{(M)}$ is the largest among the $M$ ordered variables.
Denote the second probability on the righthand side of the above equation conditioned on a fixed $w$ by $Q_3$. Recall that the joint probability density function (pdf) of two ordered statistics $z_{(i)}$ and $z_{(j)}$, $1\leq i<j\leq M$, can be written as \cite{David03}
\begin{eqnarray}
f_{z_{(i)},z_{(j)}}(u,v) = \frac{M!}{(i-1)!(j-1-i)!(M-j)!} f(u)f(v)\\ \nonumber\times (F(u))^{i-1}\left(F(v) - F(u)\right)^{j-1-i}(1-F(v))^{M-j},
\end{eqnarray}
 where the pdf and cumulative distribution function (CDF) are defined in \eqref{exxx} and the subscript $z_i$ has been omitted for simplicity. Based on such a pdf, the  probability $Q_3$  can be written as
\begin{eqnarray}\nonumber
Q_3 &=& M(M-1)\int^{w}_{\frac{w}{M}} \int^{v}_{\frac{w-v}{M-1}} f(u)F(u)^{M-2} du f(v)dv \\ \nonumber &=& M\int^{w}_{\frac{w}{M}} \left(F(v)^{M-1}- F\left(\frac{w-v}{M-1}\right)^{M-1}\right) f(v)dv.
\end{eqnarray}
Substituting the density function of $z_i$, we obtain
\begin{eqnarray}\label{Q3}
Q_3 = \left(e^{-\frac{M}{w}}-e^{-\frac{M^2}{w}}\right) -M\int^{w}_{\frac{w}{M}}   e^{-\frac{(M-1)^2}{w-v}} \frac{1}{v^2} e^{-\frac{1}{v}}dv.
\end{eqnarray}
 The probability $\mathcal{P}_w\left(   z_{(M)}  >w \right)$ can be obtained by applying the pdf of the largest order statistics as
$
\mathcal{P}_w\left(   z_{(M)}  >w \right) = 1 - e^{-\frac{M}{w}}$.
So conditioned on a fixed $w$, the upper bound can be expressed as
\[
\mathcal{P}_w\left(   z_{(M)} +(M-1)z_{(M)}>w \right)\leq 1 - e^{-\frac{M}{w}}+Q_3.
\]
On the other hand, conditioned on  $M$ source messages successfully decoded at the relay, the density function of $w=(\eta\frac{1}{\epsilon}\sum^{M}_{i=1}   |h_{i}|^2- M\eta  ) $ can be obtained from \eqref{pdf} as $
f_w(w) = \frac{1}{(M-1)!}\left(\frac{\epsilon}{\eta}\right)^M w^{M-1} e^{-\frac{\epsilon}{\eta}w}$. So the upper bound can be expressed as
\begin{eqnarray}\nonumber
\int^{\infty}_{0}\left(1 - e^{-\frac{M}{w}}+Q_3\right)f_w(w)dw \cdot \mathrm{Pr}(N=M)\\ \nonumber+\sum^{M-1}_{n=0}\mathrm{Pr}(N=n),
\end{eqnarray}
and the first part of the proposition is proved. The lower bound can be proved by using the   steps similar to those used in the proof of Proposition \ref{lemma worst2}, and will be omitted here.  \hspace{\fill}$\blacksquare$\newline

\textsl{Proof of Proposition \ref{lemma worst2} :}
Recall that the upper bound for the water filling scheme is
\begin{eqnarray}\nonumber
\mathcal{P}_{worst, III}<e^{-M\epsilon} \int_{0}^{\infty}\left(1-       e^{-\frac{M^2}{w}}   \right. \\   -M\left.\underset{Q_5}{\underbrace{\int^{w}_{\frac{w}{M}}   \frac{ e^{-\frac{(M-1)^2}{w-v}-\frac{1}{v}}}{v^2} dv}}\right)f_w(w)dw  +1-e^{-M\epsilon}.
\end{eqnarray}
To obtain a more explicit expression for this upper bound, the factor $Q_5$ can be rewritten as
\begin{align} \nonumber
Q_5  &= \frac{1}{w}\int^{M-1}_{0} e^{-\frac{(y+1)(M-1)^2}{wy} - \frac{y+1}{w}}dy \\ \label{new equation} &=  \frac{1}{w}\int^{M-1}_{0}   e^{-\frac{a(y)}{w}}dy,
\end{align}
where $a(y)= (y+1)\left(\frac{(M-1)^2+1}{y}\right)$. An important observation from \eqref{new equation} is that $a(y)$ is not a function of $w$. Furthermore,  the integration range in \eqref{new equation} is also not a function of $w$. As a result, we can first calculate the integral for $w$ by treating $y$ as a constant. First substituting \eqref{new equation} into the probability expression to obtain  the following:
\begin{align}\label{eq10}
\mathcal{P}_{worst, III}&<1-e^{-M\epsilon}+e^{-M\epsilon} \int_{0}^{\infty}\left(1-       e^{-\frac{M^2}{w}}   \right. \\   &-\left.\frac{M}{w}\int^{M-1}_{0}   e^{-\frac{a(y)}{w}}dy\right)f_w(w)dw   .
\end{align}
We focus on the integral of the third  factor in the bracket, denoted by $Q_6$, which is
\begin{eqnarray}\nonumber
Q_6&\triangleq & e^{-M\epsilon} \int_{0}^{\infty}    \frac{M}{w}\int^{M-1}_{0}   e^{-\frac{a(y)}{w}}dy f_w(w)dw \\\nonumber&=&Me^{-M\epsilon}\int^{M-1}_{0} \int_{0}^{\infty}    \frac{1}{w}   e^{-\frac{a(y)}{w}}\frac{1}{(M-1)!}\left(\frac{\epsilon}{\eta}\right)^M \\ \nonumber &&\times w^{M-1} e^{-\frac{\epsilon}{\eta}w}dwdwdy\\\nonumber&=&\frac{M}{(M-1)!}e^{-M\epsilon}\left(\frac{\epsilon}{\eta}\right)^M\int^{M-1}_{0} \int_{0}^{\infty}     w^{M-2}    \\ \nonumber &&\times e^{-\frac{a(y)}{w}-\frac{\epsilon}{\eta}w}dwdwdy
\\\nonumber&=&\frac{M}{(M-1)!}e^{-M\epsilon}\left(\frac{\epsilon}{\eta}\right)\int^{M-1}_{0}    2  \left(\frac{a(y)\epsilon}{\eta}\right)^{\frac{M-1}{2}} \\ \nonumber &&\times \mathbf{K}_{M-1}\left(2\sqrt{\frac{a(y)\epsilon}{\eta}}\right)dy.
\end{eqnarray}
Similarly the integrals of other components in \eqref{eq10} can be evaluated,   the upper bound on the worst outage probability is obtained, and the proposition is proved.
  \hspace{\fill}$\blacksquare$\newline

\textsl{Proof for Proposition \ref{propoistion} :}
The proposition can be   proved by showing the first derivative of the payoff function is
\begin{eqnarray}
\frac{\partial U_i(b_i; \mathbf{b}_{-i},\pi)}{\partial b_i}  =\left( \frac{|g_i|^2}{2\ln 2(1+P_{ri}|g_i|^2)} -\pi _i\right) \frac{\partial P_{ri}}{\partial b_i},
\end{eqnarray}
where $
\frac{\partial P_{ri}}{\partial b_i} = \frac{\sum^{M}_{j=1,j\neq i}b_j+\xi}{(\sum^{M}_{j=1}b_j+\xi)^2}$. The first factor in the brackets is a strictly decreasing function of $b_i$, and $\frac{\partial P_{ri}}{\partial b_i}$ is always positive, so the payoff function is a strictly quasi-concave function of $b_i$, which indicates that  there exists at least one NE.
The unique best response   for each player can be obtained by setting $\frac{\partial U_i(b_i; \mathbf{b}_{-i},\pi)}{\partial b_i} =0$, and a desirable outcome for the power allocation game is
 \begin{eqnarray}
 P_{ri} =  \left[\frac{1}{2\ln2 \pi}-\frac{1}{|g_i|^2}\right]^+,
 \end{eqnarray}
where $\left(x\right)^+$ denotes   $\max\{x, 0\}$. By using the fact that the power that each user can get   is bounded, i.e. $0\leq P_{ri}\leq P_r$,   the first part of the proposition can be proved.

The uniqueness of NE can be proved  by studying the  contraction  mapping of the best response functions. Consider $\pi< \min \{\pi_{u,i}, i\in \mathcal{N}\} $, and define $\varrho_i(\pi)=\frac{\left( \frac{1}{2\ln 2 \pi} - \frac{1}{|g_i|^2}\right)}{P_r - \frac{1}{2\ln 2 \pi} - \frac{1}{|g_i|^2} }$. Therefore it is necessary to prove that there exists $\nu\in (0,1)$ such that for any $\mathbf{x}$ and $\mathbf{y}$ in $\mathbb{R}^+$, $||BR(\mathbf{x})-BR(\mathbf{y})||_2\leq \nu ||x-y||_2 $, where $BR(\mathbf{b})=BR_1(\mathbf{b}_{-1})\times\cdots\times BR_N(\mathbf{b}_{-N})$, the Cartesian product of the best response function of each user and $||x||_2$ denotes the norm operation.  Consider two distinct possible action sets, $\mathbf{x}$ and $\mathbf{y}$. From \eqref{best response function}, $||BR(\mathbf{x})-BR(\mathbf{y})||_2$ can be expressed as
 \begin{eqnarray}
&&\left(\sum^{N}_{i=1} \left(BR_{i}(\mathbf{x}_{-i})- BR_{i}(\mathbf{y}_{-i})\right)^2\right)^{\frac{1}{2}} \\ \nonumber &=&\left(\sum^{N}_{i=1}\varrho_i^2(\pi)\left(\sum^{N}_{j\neq i} (x_j-y_j)\right)^2\right)^{\frac{1}{2}} \\ \nonumber &=&\left(\sum^{N}_{i=1}\varrho_i^2(\pi)\left(\zeta -  (x_i-y_i)\right)^2\right)^{\frac{1}{2}},
 \end{eqnarray}
 where $\zeta=\sum^{N}_{j=1} (x_j-y_j)$. The above expression  can be bounded as
  \begin{eqnarray}
&&\left(\sum^{N}_{i=1}\varrho_i^2(\pi)\left(\zeta -  (x_i-y_i)\right)^2\right)^{\frac{1}{2}}
\\\nonumber &\leq&\left(\sum^{N}_{i=1}\varrho_i^2(\pi)\left(|\zeta| +  |x_i-y_i|\right)^2\right)^{\frac{1}{2}}
\\ \nonumber &\underset{(a)}{\leq}&\left(\sum^{N}_{i=1}\varrho_i^2(\pi) \right)^{\frac{1}{2}}\left|\sum^{N}_{j=1} (x_j-y_j)\right|\\ \nonumber &&+\left(\sum^{N}_{i=1}\varrho_i^2(\pi)\left( x_i-y_i\right)^2\right)^{\frac{1}{2}}
\\ \nonumber &\underset{(b)}{\leq}&N^{\frac{1}{2}}\left(\sum^{N}_{i=1}\varrho_i^2(\pi) \right)^{\frac{1}{2}}\left(\sum^{N}_{j=1} (x_j-y_j)^2\right)^{\frac{1}{2}}\\ \nonumber &&+\left(\sum^{N}_{i=1}\varrho_i^2(\pi)\left( x_i-y_i\right)^2\right)^{\frac{1}{2}}
\\ \nonumber &\leq&\mu\left(\sum^{N}_{j=1} (x_j-y_j)^2\right)^{\frac{1}{2}},
 \end{eqnarray}
where $|x|$ denotes the absolute value of $x$,  $\varrho_{max}(\pi)=\max\{\varrho_1(\pi), \cdots, \varrho_N(\pi)\}$, $\mu=\left(N^{\frac{1}{2}}\left(\sum^{N}_{i=1}\varrho_i^2(\pi) \right)^{\frac{1}{2}}+\varrho_{max}(\pi)\right)$, the step (a) follows from the Minkowski¡¯s inequality and the step (b) follows from the Cauchy inequality. Since $\varrho_i(\pi)$ is a decreasing function of $\pi$,   there exists a threshold such that when $\pi $ is larger than this threshold, $\mu<1$  and
   \begin{eqnarray}
\left(\sum^{N}_{i=1} \left(BR_{i}(\mathbf{x}_{-i})- BR_{i}(\mathbf{y}_{-i})\right)^2\right)^{\frac{1}{2}}<     \left(\sum^{N}_{j=1} (x_j-y_j)^2\right)^{\frac{1}{2}},
 \end{eqnarray}
which means that the best response function is a contraction mapping, and therefore there exists a unique NE \cite{Bertsekas}.  Thus the proposition is proved.
 \hspace{\fill}$\blacksquare$\newline

 \bibliographystyle{IEEEtran}
\bibliography{IEEEfull,trasfer}

\end{document}